\newcommand{\figref}[1]{Figure \ref{fig:#1}}
\newcommand{\lemref}[1]{Lemma \ref{lem:#1}}
\newcommand{\theoref}[1]{Theorem \ref{theo:#1}}
\newcommand{\secref}[1]{Section \ref{sec:#1}}
\newcommand{\appref}[1]{Appendix \ref{sec:#1}}
\newcommand{\corref}[1]{Corollary \ref{cor:#1}}
\newcommand{\lemlab}[1]{\label{lem:#1}}
\newcommand{\theolab}[1]{\label{theo:#1}}
\newcommand{\seclab}[1]{\label{sec:#1}}
\newcommand{\corlab}[1]{\label{cor:#1}}
\newcommand{\Z}{\mathbb{Z}}
\newcommand{\bgamma}{\bm{\gamma}}
\title{Rigid components in fixed-lattice and cone frameworks}
\author{Matthew Berardi\thanks{Department of Mathematics, Temple University,
{\tt \{mberardi,justmale,theran\}@temple.edu}}
\and
Brent Heeringa\thanks{Department of Computer Science,
Williams College, {\tt heeringa@cs.williams.edu}}
\and
Justin Malestein\footnotemark[1]
\and
Louis Theran\footnotemark[1]}
\begin{document}
\thispagestyle{empty}
\maketitle

\begin{abstract}
We study the fundamental algorithmic rigidity problems for generic frameworks
periodic with respect to a fixed lattice or a finite-order rotation in the plane.
For fixed-lattice frameworks we give an $O(n^2)$ algorithm for deciding generic rigidity and
an $O(n^3)$ algorithm for computing rigid components.  If the order of rotation
is part of the input, we give an $O(n^4)$ algorithm for deciding rigidity; in the case
where the rotation's order is $3$, a more specialized algorithm solves all the
fundamental algorithmic rigidity problems in $O(n^2)$ time.
\end{abstract}

\section{Introduction}\seclab{intro}
The geometric setting for this paper involves two variations on the  well-studied
\emph{planar bar-joint} rigidity model: \emph{fixed-lattice periodic frameworks} and
\emph{cone frameworks}.  A \emph{fixed-lattice periodic framework} is an infinite structure,
periodic with respect to a lattice, where the allowed continuous motions preserve,
the lengths and connectivity of the bars, as well as the periodicity with respect to a fixed lattice.
See \figref{fixedlattice}(a) for an example.  A \emph{cone framework} is also made of
fixed-length bars connected by universal joints, but it is finite and symmetric with
respect to a finite order rotation; the allowed continuous motions preserve the bars' lengths
and connectivity and symmetry with respect to a fixed rotation center.
Cone frameworks get their name from the fact
that the quotient of the plane by a finite order rotation is a flat cone with opening angle
$2\pi/k$ and the quotient framework, embedded in the cone with geodesic ``bars'',
captures all the geometric information \cite{MT11}.  \figref{cone}(a) shows an example.

A fixed-lattice framework is \emph{rigid} if the only allowed motions are translations and \emph{flexible} otherwise.
A cone-framework is \emph{rigid} if the only allowed motions are rotations around the center and \emph{flexible} otherwise.
The alternate formulation for cone frameworks
says that rigidity means the only allowed motions are isometries of the cone, which is just rotation
around the cone point.  A framework is \emph{minimally rigid} if it is rigid, but ceases to be so if any
of the bars are removed.

\paragraph{Generic rigidity} The combinatorial model for the fixed-lattice
and cone frameworks introduced above is given by a \emph{colored graph} $(G,\bgamma)$: $G=(V,E)$ is
a finite directed graph and $\bgamma=(\gamma_{ij})_{ij\in E}$ is an assignment of a group element
$\gamma_{ij}\in \Gamma$ (the ``color'') to each edge $ij$ for a group $\Gamma$. For fixed-lattice frameworks,
the group $\Gamma$ is $\Z^2$, representing translations;
for cone frameworks it is $\Z/k\Z$ with $k\ge 2$ a natural number. See \figref{fixedlattice}(b) and \figref{cone}(b).

The colors can be seen as efficiently encoding a map $\rho$ from the oriented cycle space of $G$ into $\Gamma$;
$\rho$ is defined, in detail, in \secref{prelim}.  If the image of $\rho$ restricted to a subgraph $G'$ contains only the
identity element, we define the \emph{$\Gamma$-image} of $\rho$ to be \emph{trivial} otherwise it is \emph{non-trivial}.
\begin{figure}[htbp]
\centering
\subfigure[]{\includegraphics[width=.45\columnwidth]{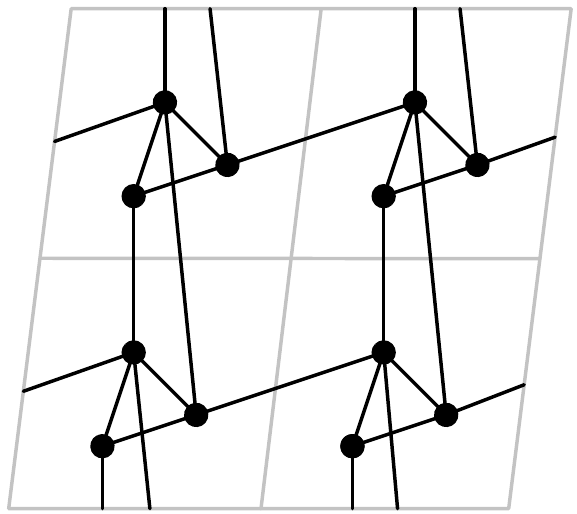}}
\subfigure[]{\includegraphics[width=.45\columnwidth]{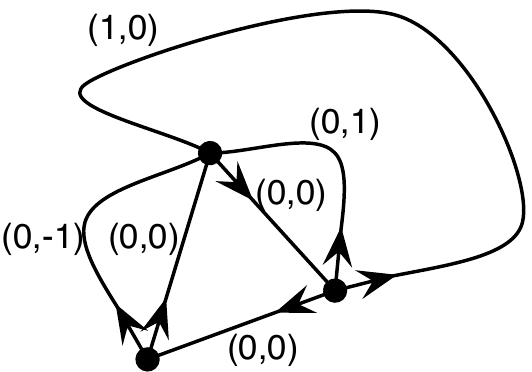}}
\caption{Periodic frameworks and colored graphs: (a) part of a periodic framework, with the representation of the
integer lattice $\Z^2$	shown in gray and the bars shown in black; 	(b) one possibility for the the associated colored
graph with $\Z^2$ colors on the edges.  (Graphics from \cite{MT10}.)}
\label{fig:fixedlattice}
\end{figure}
\begin{figure}[htbp]
\centering
\subfigure[]{\includegraphics[width=.65\columnwidth]{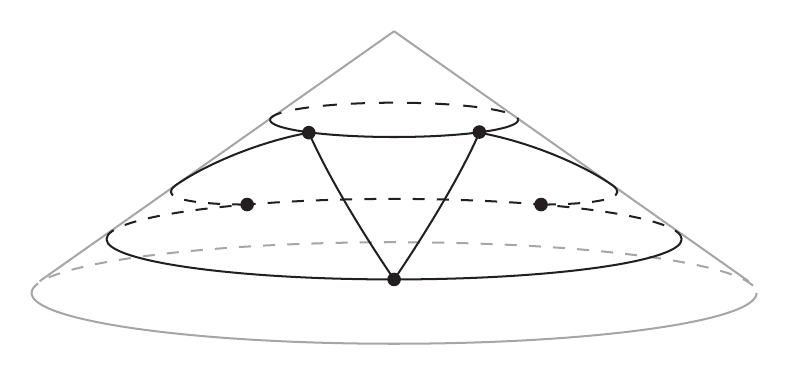}}
\subfigure[]{\includegraphics[width=.4\columnwidth]{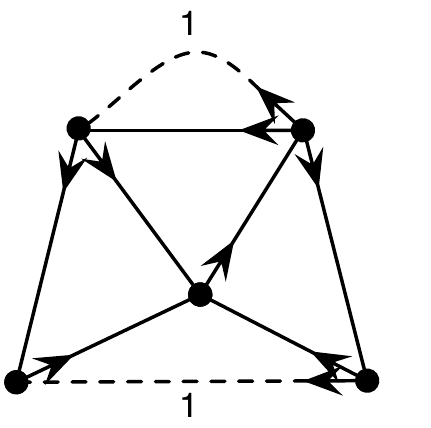}}
\subfigure[]{\includegraphics[width=.45\columnwidth]{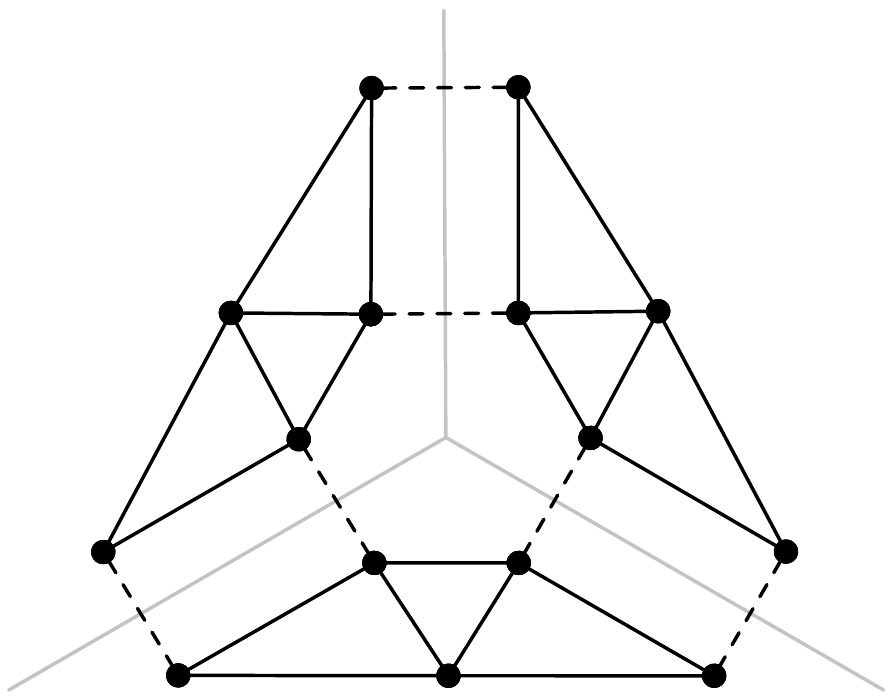}}
\caption{Cone-Laman graphs: (a) a realization of the framework on a cone
with opening angle $2\pi/3$ (graphic from Chris Thompson); (b) a $\Z/3\Z$-colored
graph (edges without colors have color $0$); (c)
the developed graph with $\Z/3\Z$-symmetry
(dashed edges are lifts of dashed edges in (b)).}
\label{fig:cone}
\end{figure}
The generic rigidity theory of planar frameworks with, more generally,
crystallographic symmetry has seen a lot of progress recently
\cite{BS10,R09,MT10,MT11}.  Elissa Ross \cite{R09} announced the following theorem:
\begin{theorem}[\cite{R09,MT10}]\theolab{fixedlattice}
A generic fixed-lattice periodic framework with associated colored graph $(G,\bgamma)$ is minimally rigid if and only if:
(1) $G$ has $n$ vertices and $2n-2$ edges; (2) all non-empty subgraphs $G'$ of $G$ with $m'$ edges and $n'$
vertices and trivial $\Z^2$-image satisfy $m'\le 2n'-3$; (3)  all non-empty subgraphs $G'$ with non-trivial
$\Z^2$-image satisfy $m'\le 2n'-2$.
\end{theorem}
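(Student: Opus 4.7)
The plan is to prove the theorem by establishing necessity and sufficiency of the sparsity conditions separately, working through the generic rigidity matrix of a fixed-lattice framework. The rigidity matrix $R(G,\bgamma,p)$ has $2n$ columns (two coordinates per vertex orbit) and always has an infinitesimal kernel of dimension at least $2$ corresponding to global translations of the lattice representation. Consequently, rigidity forces $\operatorname{rank}(R)=2n-2$, and minimal rigidity forces exactly $m=2n-2$ edges with all rows independent in the generic rigidity matroid.

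For necessity of (2) and (3), I would work subgraph-by-subgraph. If a subgraph $G'$ has trivial $\Z^2$-image, then all cycles of $G'$ lift trivially to the universal cover, and the rows of $R$ indexed by $G'$ factor through the classical planar rigidity matrix of a finite graph obtained from one fundamental domain of $G'$. Ordinary Laman theory then gives $\operatorname{rank}(R|_{G'}) \le 2n'-3$, which implies $m' \le 2n'-3$ when $G$ is generically minimally rigid (independence in the full matroid implies independence in any restriction). If $G'$ has non-trivial image, then $G'$ itself carries a fixed-lattice framework structure whose rigidity matrix still has translations in its kernel, yielding $\operatorname{rank}(R|_{G'}) \le 2n'-2$ and the claimed bound.

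For sufficiency, the approach is a Henneberg-style induction. First I would verify that the colored graphs satisfying (2) and (3) form the bases of a matroid (a variant of the $(2,\ell)$-sparsity matroids on colored graphs, where $\ell=2$ or $3$ depending on the $\Z^2$-image), using standard sparsity/matroid union arguments. Next, I would show every $(G,\bgamma)$ satisfying (1)--(3) admits an \emph{inverse Henneberg reduction}: either a degree-$2$ vertex can be removed, or a degree-$3$ vertex admits a valid inverse vertex-split that preserves the counts (2) and (3) with appropriately recolored edges. From a small base case (a graph on one vertex whose edges satisfy the counts), the induction builds the generic framework by the inverse moves, and I would verify directly that each Henneberg move preserves generic rank: adding a degree-$2$ vertex contributes two new independent rows, while the degree-$3$ vertex-split case uses the standard argument that at least one of the admissible splits preserves rank at a generic point.

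The main obstacle will be the Henneberg reduction step under condition (3). The tricky scenario is when removing or splitting a vertex changes the $\Z^2$-image of some subgraph, potentially violating condition (2) or (3) in the reduced colored graph. Verifying that the counts and the matroidal structure interact correctly --- in particular that a valid inverse split always exists when no degree-$2$ vertex is available --- is where the color-sensitive sparsity arguments (distinguishing trivial vs.\ non-trivial image subgraphs) must be carried out with care; this is the key technical content beyond the classical Laman proof.
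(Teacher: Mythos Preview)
This theorem is not proved in the present paper. It is stated with attribution to \cite{R09,MT10} and used as the combinatorial characterization on which the algorithmic results of the paper are built; the paper itself contains no proof of it. There is therefore no ``paper's own proof'' to compare your proposal against.

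On the merits of your sketch: the necessity direction is essentially correct and is how these arguments go in the cited sources. Translations always lie in the kernel of the fixed-lattice rigidity matrix, giving the $2n'-2$ bound, and when the $\Z^2$-image of $G'$ is trivial one can recolor so that all edges of $G'$ carry the identity, after which the restricted rigidity matrix is exactly the ordinary bar-joint rigidity matrix of a finite graph and the Laman bound $2n'-3$ applies.

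The sufficiency direction is where your proposal is a genuine plan rather than a proof, and it diverges from the approach in \cite{MT10}. That paper does \emph{not} use Henneberg induction; it proceeds via colored direction networks and a matroid-union decomposition of Ross graphs into spanning trees with suitable color constraints, then shows that a generic direction network on such a graph has a faithful realization and passes from directions to lengths. A Henneberg-style argument for the fixed-lattice case is not standard, and the obstacle you yourself flag is real: it is not clear that every degree-$3$ vertex in a Ross graph admits an inverse split that returns a Ross graph, because the split can change the $\Z^2$-image of subgraphs and hence which sparsity count applies to them. The classical ``one of the three candidate edges works'' argument does not transfer without an additional colored-graph lemma that you have not supplied. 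So as written, your sufficiency argument has a gap at exactly the point you identify as the main obstacle.
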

The colored graphs appearing in the statement of \theoref{fixedlattice} are defined to be \emph{Ross graphs}; if
only conditions (2) and (3) are met, $(G,\bgamma)$ is \emph{Ross-sparse}. Ross graphs generalize the well-known
\emph{Laman graphs} which are uncolored, have $m=2n-3$ edges, and satisfy (2).
By \theoref{fixedlattice} the maximal rigid sub-frameworks of a generic fixed-lattice framework on a Ross-sparse colored graph $(G,\bgamma)$
correspond to maximal subgraphs of $G$ with $m'=2n'-2$; we define these to be the \emph{rigid components} of
$(G,\bgamma)$.  In the sequel, we will also refer to graphs with the Ross property for $\Gamma=\Z/k\Z$ as simply ``Ross graphs''.

Malestein and Theran \cite{MT11} proved a similar statement for cone frameworks:
\begin{theorem}[\cite{MT11}]\theolab{cone}
A generic cone framework with associated colored graph $(G,\bgamma)$ is minimally rigid if and only if:
(1) $G$ has $n$ vertices and $2n-1$ edges; (2) all non-empty subgraphs $G'$ of $G$ with $m'$ edges and $n'$
vertices and trivial $\Z/k\Z$-image satisfy $m'\le 2n'-3$; (3)  all non-empty subgraphs $G'$ with non-trivial
$\Z/k\Z$-image satisfy $m'\le 2n'-1$.
\end{theorem}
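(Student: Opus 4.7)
My plan is to follow the standard template for generic combinatorial rigidity theorems: set up the cone-rigidity matrix $M = M(G,\bgamma,p)$, identify its trivial kernel, and then show that its generic rank equals the matroidal rank function determined by conditions (2) and (3). For a configuration $p=(p_1,\ldots,p_n)$ with cone point at the origin, each edge $ij$ with color $\gamma_{ij}=\ell$ contributes a row obtained by differentiating the constraint $\|p_i - R_\ell p_j\|^2 = \mathrm{const}$, where $R_\ell$ denotes rotation by $2\pi\ell/k$. The only infinitesimal motions lying in $\ker M$ for every realization are rotations about the cone point, a $1$-dimensional space. Hence minimal rigidity forces $m = 2n-1$ and $\mathrm{rank}(M) = 2n-1$, accounting for condition (1).

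For necessity of (2) and (3), I would bound the rank of the submatrix of $M$ indexed by the rows and columns of a subgraph $G'$. If $G'$ has trivial $\Z/k\Z$-image then, after realigning fundamental domains by a global translation, every edge of $G'$ effectively has color $0$, so the rows are exactly those of the standard planar bar-joint rigidity matrix on the $n'$ vertices; its kernel contains two translations and one rotation, giving rank $\le 2n'-3$. If $G'$ has non-trivial image, then rotation about the cone point still lies in the kernel, but a translation $\dot p_i = v$ is killed by a row of color $\ell\ne 0$ only when $R_\ell v = v$, which never holds for $\ell\ne 0$ in $\Z/k\Z$; so only rotation is trivial and the rank is $\le 2n'-1$. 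These match conditions (2) and (3).

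The sufficiency direction is the main content, and I would split it into two steps. Combinatorially, one checks that conditions (2) and (3), interpreted as sparsity bounds, define the independent sets of a matroid on $E(G)$ via a submodularity argument that tracks the trivial/non-trivial dichotomy across unions and intersections of subgraphs; this runs in parallel with the proof of \theoref{fixedlattice}. Geometrically, one must exhibit, for each colored graph satisfying (1)--(3), a single realization at which $M$ attains rank $2n-1$; by semicontinuity of rank, the generic rank then equals $2n-1$ as well. My approach would be a Henneberg-type induction on $n$: define $\Z/k\Z$-colored analogues of vertex addition and edge split that preserve both the sparsity counts and generic rigidity, and reduce to a small base case such as a one-vertex graph with a non-trivial loop.

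The main obstacle is the edge-split step. An edge split that is legal in the uncolored Laman setting may collapse the $\Z/k\Z$-image of some subgraph to trivial, turning a $2n'-1$ sparsity bound into the stricter $2n'-3$ bound and violating the matroid structure. Handling this requires a delicate choice of colors on the two new edges, and in the approach of \cite{MT11} it is bypassed by translating the rigidity question into the solvability of an auxiliary direction-network problem on the colored graph, where the rank of $M$ at a carefully chosen configuration is computed by linear-algebraic means rather than by direct Henneberg manipulation.
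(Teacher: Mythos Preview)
The paper does not contain a proof of \theoref{cone} at all: it is quoted from \cite{MT11} as a known background result, and the present paper's contribution is purely algorithmic (pebble-game style algorithms for the \textbf{Decision}, \textbf{Extraction}, and \textbf{Components} problems that \emph{use} \theoref{cone} as input). So there is nothing in this paper to compare your proposal against.

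As a sketch of how \cite{MT11} proceeds, your outline is broadly accurate, and you correctly identify both the matroidal structure and the direction-network workaround for the inductive step. One small imprecision in your necessity argument: when $G'$ has trivial $\Z/k\Z$-image, the ``realignment'' that reduces the submatrix to a standard bar-joint rigidity matrix is not a global translation but a per-vertex rotation $p_i\mapsto R_{\eta_i}p_i$, where $\eta_i$ is the (well-defined, because $\rho(G')$ is trivial) color of a path in $G'$ from a chosen root to $i$. Under this change of variables every edge of $G'$ acquires color $0$ and the three-dimensional Euclidean kernel (two translations and one rotation) appears, yielding the bound $m'\le 2n'-3$. Your description of the non-trivial case and of the sufficiency obstacle is fine.
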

The graphs appearing in the statement of \theoref{cone} are called \emph{cone-Laman graphs}.  We define
\emph{cone-Laman-sparse} colored graphs and their rigid components similarly to the analogous definitions
for Ross-sparse graphs, with $2n'-1$ replacing $2n'-2$.

Ross and cone-Laman graphs are examples of the ``$\Gamma$-graded-sparse'' colored graphs
introduced in \cite{MT10,MT11}.  They are all matroidal families \cite{MT10,MT11},
which guarantees that greedy algorithms work correctly on them.

\paragraph{Main results}
In this paper we begin the investigation of the algorithmic theory of crystallographic rigidity
by addressing the fixed-lattice and cone frameworks.  Given a colored graph $(G,\bgamma)$,
we are interested in the rigidity properties of an associated generic framework.  Lee and Streinu
\cite{LS08} define three fundamental algorithmic rigidity questions:
\textbf{Decision} \emph{Is the input rigid?};
\textbf{Extraction} \emph{Find a maximum subgraph of the input corresponding to independent
length constraints};
\textbf{Components} \emph{Find the maximal rigid sub-frameworks of a flexible input}.

We give algorithms for these problems with running times shown in the following table

\vspace{2mm}
\begin{footnotesize}
\begin{tabular}{lccc}
\hline

& \textbf{Decision} & \textbf{Extraction} & \textbf{Components} \\
\hline
\hline
Fixed-lattice & $O(n^2)$ & $O(n^3)$ & $O(n^3)$ \\
\hline
Cone $k\neq 3$ & $O(n^4)$ & $O(n^5)$ & $O(n^5)$ \\
\hline
Cone $k=3$ & $O(n^2)$ & $O(n^2)$ & $O(n^2)$ \\
\hline
\end{tabular}
\end{footnotesize}
\vspace{-2mm}

\paragraph{Novelty}
Previously, the only known efficient combinatorial algorithms for any of these problems were pointed out in
\cite{MT10,MT11}: the Edmonds Matroid Union algorithm yields an algorithm with running times
$O(n^4)$ for \textbf{Decision} and $O(n^5)$ \textbf{Extraction}.  A folklore randomized  algorithm
based on Gaussian elimination gives an $O(n^3\operatorname{polylog}(n))$ algorithm for
\textbf{Decision} and \textbf{Extraction} of most rigidity problems, but this
doesn't easily generalize to \textbf{Components}.

The $O(n^2)$ running time for \textbf{Decision} for fixed-lattice frameworks
equals that from the \emph{pebble game} \cite{LS08,HJ97,BJ03} for the corresponding
problem in finite frameworks.  Although there are faster \textbf{Decision} algorithms \cite{GW92}
for finite frameworks, the pebble game is the standard tool in the field due to its elegance and
ease of implementation.  Our algorithms for cone frameworks with order $3$ rotation are a reduction
to the pebble games of \cite{LS08,HJ97,BJ03}.

The $O(n^3)$ running time for \textbf{Extraction} and \textbf{Components} in fixed-lattice frameworks is
worse by a factor of $O(n)$ than the pebble games for finite frameworks.  However, it is equal to the
$O(n^3)$ running time from \cite{LS08} for the ``redundant rigidity'' problem.  Computing
\emph{fundamental Laman circuits} (definition in \secref{prelim}) plays an important role
(though for different reasons) in both of these algorithms.

\paragraph{Roadmap and key ideas}
Our main contribution is a pebble game algorithm for Ross graphs, from which we can deduce
the corresponding results for general cone-Laman graphs.  Intuitively, the algorithmic
rigidity problems should be harder for Ross graphs than for Laman graphs, since the number of
edges allowed in a subgraph depends on whether the $\Z^2$-image of the subgraph is trivial or
not.  To derive an efficient algorithm we use three key ideas
(detailed definitions are given in \secref{prelim}):
\begin{itemize}
\item The Lee-Streinu-Theran \cite{LST07} approach of playing several copies of the pebble game for $(k,\ell)$-graphs
\cite{LS08}	with different parameters to handle different sparsity counts for different types of subgraphs.
\item A \emph{new} structural characterization of the edge-wise minimal colored graphs which violate the Ross counts
(\secref{comb}).
\item A \emph{linear time} algorithm for computing the $\Gamma$ image of a given subgraph (\secref{z2rank}).
\end{itemize}

Our algorithms for general cone-Laman graphs then use the Ross graph \textbf{Decision} algorithm as a subroutine.
When the order of the rotation is $3$, we can reduce the cone-Laman rigidity questions to Laman graph rigidity
questions directly, resulting in better running times.

\paragraph{Motivation}
Periodic frameworks, in which the lattice \emph{can} flex, arise in the study of \emph{zeolites},
a class of microporous crystals with a wide variety of industrial applications, notably in petroleum refining.
Because zeolites exhibit \emph{flexibility} \cite{SWTT06}, computing the degrees of freedom in \emph{potential}
\cite{R06,TRBR04} zeolite structures is a well-motivated algorithmic problem.

\paragraph{Other related work}
The general subject of periodic and crystallographic
rigidity has seen a lot of progress recently, see \cite{JOP10} for a list of announcements.
Bernd Schulze \cite{S10} has studied Laman graphs with a free $\Z/3\Z$ action in a different context.

\section{Preliminaries}\seclab{prelim}
In this section, we introduce the required background in colored graphs, hereditary sparsity, and
introduce a data structure for least common ancestor queries in trees that is an essential tool for us.

\paragraph{Colored graphs and the map $\rho$}  A pair $(G,\bgamma)$ is defined to be a \emph{colored graph} with
$\Gamma$ a group, $G=(V,E)$  a finite, directed graph on $n$ vertices and $m$ edges, and $\bgamma=(\gamma_{ij})_{ij\in E}$
is an assignment of a group element $\gamma\in \Gamma$ to each edge.

Let  $(G,\bgamma)$ be a colored graph, and let $C$ be a cycle in $G$ with a fixed traversal order.  We define $\rho(C)$ to be
\[
\rho(C) = 
\sum_{\substack{ij\in C \\ \text{$ij$ traversed} \\ \text{ forwards}}} \gamma_{ij}
-
\sum_{\substack{ij\in C \\ \text{$ij$ traversed} \\ \text{backwards}}} \gamma_{ij} 
\]
Since $\Gamma$ is always abelian in this paper,
we need not be concerned with the particular order of summation; our notation doesn't capture the specific traversal of $C$,
but this is not important here since we are interested in whether  $\rho(C)$ is trivial or not, which doesn't depend on sign.  For a subgraph $G'$
of $G$, we define $\rho(G')$ to be \emph{trivial} if its image on cycles spanned by $G'$ contains only the identity and \emph{non-trivial} otherwise.
We need the following fact about $\rho$.
\begin{lemma}[{\cite[Lemma 2.2]{MT10}}]\lemlab{fundamentalcycles}
Let $(G,\bgamma)$ be a colored graph.  Then $\rho(G)$ is trivial if and only if, for \emph{any} spanning forest $T$ of $G$, $\rho$ is trivial
on every fundamental cycle induced by $T$.
\end{lemma}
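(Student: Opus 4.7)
The plan is to exploit the linearity of $\rho$ on the integer cycle space of $G$ and the fact that fundamental cycles form a basis of that space. The forward direction is immediate: the fundamental cycles induced by $T$ are themselves cycles of $G$, so $\rho$ evaluates to the identity on each of them whenever $\rho(G)$ is trivial.

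For the converse, I would first extend $\rho$ from oriented cycles to the integer cycle space $Z_1(G;\mathbb{Z}) \subset \mathbb{Z}^E$ by the formula $\rho(c) = \sum_{ij\in E} c_{ij}\gamma_{ij}$ for any $1$-chain $c = (c_{ij})$ with zero boundary. A short check shows that for an oriented simple cycle $C$ the signed $\pm 1$ indicator vector of $C$ recovers the definition of $\rho(C)$ given in the paper; because $\Gamma$ is abelian, this extended $\rho$ is a well-defined $\mathbb{Z}$-linear homomorphism of abelian groups, independent of how the cycle is traced.

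Next I would invoke the standard fact from algebraic graph theory that the fundamental cycles induced by any spanning forest $T$ form a $\mathbb{Z}$-basis of $Z_1(G;\mathbb{Z})$: each non-tree edge $e$ together with the unique $T$-path between its endpoints yields one fundamental cycle, and these are linearly independent (each involves a distinct non-tree edge) and are known to span $Z_1(G;\mathbb{Z})$. By linearity, if $\rho$ vanishes on every fundamental cycle induced by $T$, then it vanishes on all of $Z_1(G;\mathbb{Z})$; in particular $\rho(C)$ is the identity for every cycle $C$ of $G$, so $\rho(G)$ is trivial.

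The only point that requires a little care is the very first step: verifying that the signed-sum extension of $\rho$ is genuinely well-defined on the cycle space and agrees with the paper's definition on any traversal of a simple cycle. This is exactly where the hypothesis that $\Gamma$ is abelian is used, and I do not expect any further obstacle, since what remains is the familiar principle that a homomorphism out of a free abelian group is determined by its values on a basis.
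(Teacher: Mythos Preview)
Your argument is correct and is essentially the standard one: extend $\rho$ $\mathbb{Z}$-linearly to the integer cycle space, use that the fundamental cycles of a spanning forest form a $\mathbb{Z}$-basis of that space, and conclude. The only remark is that the paper does not actually supply a proof of this lemma---it is quoted verbatim from \cite{MT10}---so there is nothing in the present paper to compare against; your write-up would serve perfectly well as the omitted proof. One small observation: your converse in fact establishes the stronger statement that triviality on the fundamental cycles of \emph{some} spanning forest already forces $\rho(G)$ to be trivial, which is exactly what the algorithm in Section~\ref{sec:z2rank} needs.
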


\paragraph{$(k,\ell)$-sparsity and pebble games}
The hereditary sparsity counts defining Ross and cone-Laman graphs generalize to \emph{$(k,\ell)$-sparse graphs} which satisfy
``$m'\le kn'-\ell$'' for all subgraphs; if in addition the total number of edges is $m=kn-\ell$, the graph is a \emph{$(k,\ell)$-graph}.
We also need the notion of a \emph{$(k,\ell)$-circuit}, which is an edge-minimal graph that is not $(k,\ell)$-sparse; these are always
$(k,\ell-1)$-graphs \cite{LS08}.  If $G$ is any graph, a \emph{$(k,\ell)$-basis} of $G$ is a maximal subgraph that is $(k,\ell)$-sparse;
if $G'$ is a $(k,\ell)$-basis of $G$ and $ij\in E(G) - E(G')$, the \emph{fundamental $(k,\ell)$-circuit of $ij$ with respect to $G'$}
is the unique (see \cite{LS08}) $(k,\ell)$-circuit in $G'+{ij}$.  See \cite{LS08} for a detailed development of this theory.
As is standard in the field, we use ``$(2,3)$-'' and ``Laman'' interchangeably.

Although $(k,\ell)$-sparsity is defined by exponentially many inequalities on subgraphs,
it can be checked in quadratic time using the \emph{pebble game} \cite{LS08}, an elegant incremental
approach that builds a $(k,\ell)$-sparse graph $G$ one edge at a time. Here, we will use the
pebble game as a ``black box'' to:
(1) Check if an edge $ij$ is in the span of any $(k,\ell)$-component of $G$ in $O(1)$ time \cite{LS08,LST05};
(2) Assuming that $G$ plus a new edge $ij$ is $(k,\ell)$-sparse, add the edge $ij$ to $G$ and update the components in amortized $O(n^2)$ time \cite{LS08};
(3) Compute the fundamental circuit with respect to a given $(k,\ell)$-sparse graph $G$ in $O(n)$ time \cite{LS08}.

\paragraph{Least common ancestors in trees} Let $T$ be a rooted tree with root $r$ and $i$ and $j$ be any vertices in $T$.
The \emph{least common ancestor} (shortly, LCA) of $i$ and $j$ is defined to be the vertex where the (unique, since $T$ is a tree)
paths from $i$ to $r$ and $j$ to $r$ first converge.  If either $i$ or $j$ is $r$, then this is just $r$.  A fundamental result of
Harel and Tarjan \cite{HT84} is that LCA queries can be answered in $O(1)$ time after
$O(n)$ preprocessing.

\section{Combinatorial lemmas}\seclab{comb}
In this section we prove structural properties of Ross and cone-Laman graphs that
are required by our algorithms.

\paragraph{Ross graphs}
Let $(G,\bgamma)$ be a colored graph and suppose that $G$ is a $(2,2)$-graph.
We can verify that $(G,\bgamma)$ is Ross by checking the $\Z^2$-images of a
relatively small set of subgraphs.
\begin{lemma}\lemlab{circuits}
Let $(G,\bgamma)$ be a colored graph and suppose that $G$ is a $(2,2)$-graph.  Then $(G,\bgamma)$ is a Ross
graph if and only if for \emph{any} Laman basis $L$ of $G$, the fundamental Laman circuit with respect to $L$
of every edge $ij\in E - E(L)$ has non-trivial $\Z^2$-image.
\end{lemma}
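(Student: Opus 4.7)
The plan is to prove both implications directly using two matroid facts about Laman sparsity together with the observation that having trivial $\Z^2$-image is hereditary, since every cycle of a subgraph is a cycle of the ambient graph. The matroid facts are: (i) every Laman circuit $C$ is a $(2,2)$-graph, so it has exactly $m_C = 2n_C - 2$ edges; and (ii) every Laman circuit $C$ of $G$ arises as a fundamental Laman circuit with respect to some Laman basis $L$ of $G$ (pick any $e\in C$, extend the Laman-independent set $C-e$ to a Laman basis $L$ of $G$, and observe that $C\subseteq L+e$ is a circuit, hence by uniqueness \emph{the} fundamental circuit of $e$ with respect to $L$).

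The forward direction is then immediate. Assume $(G,\bgamma)$ is Ross and fix any Laman basis $L$ and any $ij\in E-E(L)$. The fundamental Laman circuit $C$ of $ij$ with respect to $L$ is a Laman circuit in $G$, so by (i) it satisfies $m_C = 2n_C - 2$. Ross condition (2) says every subgraph with trivial $\Z^2$-image satisfies $m'\le 2n'-3$, so $C$ cannot have trivial image.

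For the reverse direction I argue by contradiction. Suppose $(G,\bgamma)$ is not Ross. Because $G$ is a $(2,2)$-graph, Ross conditions (1) and (3) are automatic, so condition (2) must fail: there is a subgraph $G'$ with trivial $\Z^2$-image and $m' \ge 2n' - 2$, and hereditary $(2,2)$-sparsity forces $m' = 2n' - 2$. Since $m' > 2n'-3$, $G'$ is not Laman-sparse and hence contains some Laman circuit $C$; by the heredity observation, $C$ inherits trivial $\Z^2$-image from $G'$. Now apply (ii): choose any $e\in C$, extend $C-e$ to a Laman basis $L$ of $G$, and conclude that $C$ is the fundamental Laman circuit of $e$ with respect to $L$. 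This contradicts the assumption that every such fundamental circuit has non-trivial image, finishing the proof.

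The main thing to be careful about is the reverse direction: one might be tempted to try to work with a fixed $L$ and use circuit exchange to rewrite a bad Laman circuit as a fundamental circuit with respect to $L$, which is awkward because the $\Z^2$-image is not well behaved under symmetric differences of circuits. The trick that avoids this obstacle is to exploit the universal quantifier in the statement and \emph{build} a Laman basis tailored to the offending circuit $C$; once $C$ is known to have trivial image, (ii) produces a Laman basis with respect to which $C$ is itself a fundamental circuit, and the hypothesis applies without any further matroid manipulation.
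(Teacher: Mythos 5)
Your forward direction is fine, and your facts (i) and (ii) are both correct. The problem is in the reverse direction, and it is exactly the move you flag as your ``trick'': exploiting the universal quantifier and building a Laman basis tailored to the offending circuit $C$. What you end up proving is only ``if \emph{every} Laman basis has all its fundamental circuits non-trivial, then $(G,\bgamma)$ is Ross.'' But the content of the lemma --- made explicit in the sentence right after its statement (``The important point is that we can pick \emph{any} Laman basis $L$ of $G$'') and relied on by step \textbf{(C)} of the Ross pebble game, which maintains one particular $(2,3)$-sparse graph and only ever inspects \emph{its} fundamental circuits --- is that checking a \emph{single, arbitrarily chosen} basis suffices. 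Your argument gives no information about a basis you did not construct: if the algorithm happens to hold a basis $L$ with respect to which the bad circuit $C$ is not fundamental, your proof says nothing.

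That this is a real gap and not a quibble about wording is shown by the paper's own counterexample for $(2,1)$-graphs (\figref{21circuit}): there a trivial-image $K_4$ exists but is not a fundamental circuit of the indicated blue basis, so the single-basis characterization fails --- yet your argument would go through verbatim there (extend $K_4-e$ to a basis and detect the violation), ``proving'' the for-all-bases version in a setting where the paper explicitly says the lemma is false. The missing ingredient, and the place where the hypothesis that $G$ is a $(2,2)$-graph does real work, is basis-independence: by the $(k,\ell)$-block intersection structure theorem, two $(2,2)$-blocks in a $(2,2)$-graph meet in a $(2,2)$-block or not at all, so the Laman circuits of $G$ are pairwise edge-disjoint; no circuit-elimination steps are possible, and hence \emph{every} Laman circuit is a fundamental circuit of \emph{every} Laman basis (\lemref{elim}). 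Once you have that, your offending circuit $C$ is automatically a fundamental circuit of whatever basis $L$ was fixed, and the reverse direction closes. Without it, your proof establishes a strictly weaker statement than the one the algorithm needs.
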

\figref{circuits} shows two examples.  The important point is that we can pick \emph{any} Laman basis $L$ of $G$.  The
proof is deferred to \appref{circuits}.  The main idea is that $G$ being a $(2,2)$-graph forces
all Laman circuits to be edge-disjoint, from which we can deduce all of them are fundamental
Laman circuits of every Laman basis.
\begin{figure}[htbp]
\centering
\subfigure[]{\includegraphics[width=.45\columnwidth]{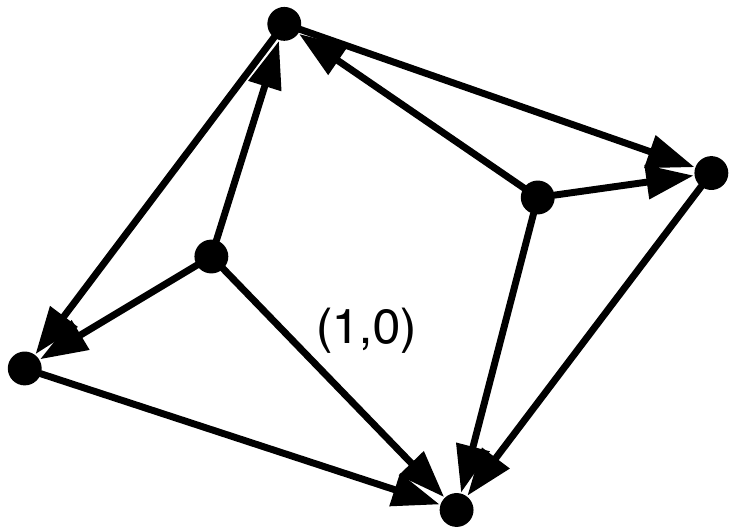}}
\subfigure[]{\includegraphics[width=.45\columnwidth]{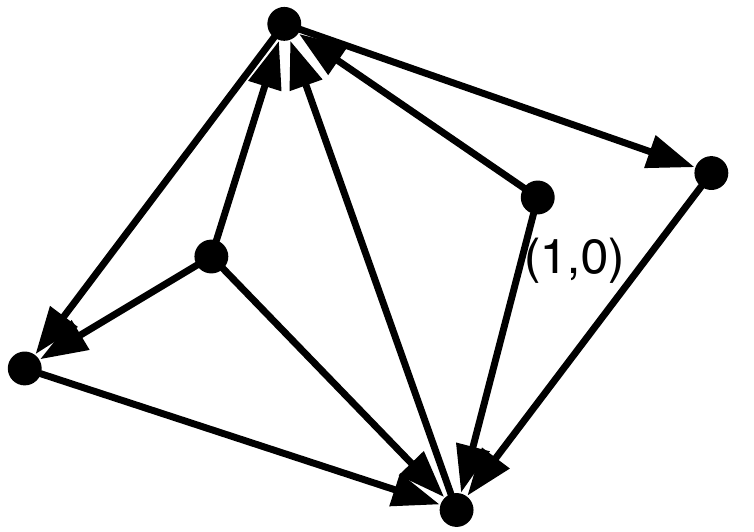}}
\caption{Examples of Ross and non-Ross graphs (edges without colors have color $(0,0)$):
(a) a Ross graph; the underlying graph is itself a Laman circuit; (b)
the underlying graph is a $(2,2)$-graph, but the uncolored $K_4$ subgraph has trivial image, so this
is not a Ross graph.  Note that $K_4$ is a Laman circuit, illustrating \lemref{circuits}}
\label{fig:circuits}
\end{figure}

\paragraph{Cone-Laman graphs}
Because cone-Laman graphs have an underlying $(2,1)$-graph, the statement of \lemref{circuits},
with $(2,1)$- replacing $(2,2)$- does \emph{not} hold for cone-Laman graphs.
\figref{21circuit} shows a counterexample.  The analogous statement, proven in \appref{conecircuits} is:
\begin{lemma}\lemlab{conecircuits}
Let $(G,\bgamma)$ be a colored graph.  Then $(G,\bgamma)$ is a cone-Laman graph if
and only if: (1) $G$ is a $(2,1)$-graph; (2) for \emph{any} $(2,2)$-basis $R$ of $G$, the
fundamental $(2,2)$-circuit $G'$ with respect to $R$ of $ij\in E(G)-E(R)$ becomes a Ross
graph after removing \emph{any} edge from $G'$; (3) for \emph{any} Laman-basis $L$ of $G$,
the fundamental Laman-circuits with respect to $L$ have non-trivial $\Gamma$-image.
\end{lemma}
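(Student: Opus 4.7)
The plan is to prove the equivalence by a structural analysis of Laman circuits in the $(2,1)$-graph $G$, adapting the strategy of \lemref{circuits} (where $G$ was a $(2,2)$-graph, so Laman circuits were forced to be edge-disjoint) to the present setting where Laman circuits may share edges. For the forward direction, assuming $(G,\bgamma)$ is cone-Laman, conditions (1), (2), and (3) all follow from the cone-Laman sparsity counts. Condition (1) is immediate. For (3), any fundamental Laman circuit has $m = 2n'-2$, so its $\Gamma$-image must be non-trivial or the cone-Laman bound $m' \le 2n'-3$ for trivial-image subgraphs is violated. For (2), a fundamental $(2,2)$-circuit $G'$ has $m = 2n'-1$, and since $G'$ is minimally $(2,2)$-dependent, $G' - e$ is $(2,2)$-sparse with $m = 2n'-2$ for each edge $e$, hence a $(2,2)$-graph. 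The Ross axioms then inherit from cone-Laman: $G' - e$ itself has non-trivial image (else its own trivial-image count is violated), non-trivial-image subgraphs of $G' - e$ satisfy $m'' \le 2n''-2$ by $(2,2)$-sparsity of $G' - e$, and trivial-image subgraphs satisfy $m'' \le 2n''-3$ directly from cone-Laman applied to subgraphs.

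For the backward direction, assuming (1), (2), (3), it suffices to show every Laman circuit $C$ of $G$ has non-trivial $\Gamma$-image: (1) supplies $m = 2n-1$ and the $(2,1)$-sparsity bound, while any subgraph $H$ with trivial image and $m_H \ge 2n_H - 2$ would contain a Laman circuit $C \subseteq H$ whose image would also be trivial. The key claim is that every Laman circuit $C$ arises as a fundamental Laman circuit with respect to some Laman basis: pick any $ij \in C$; the set $C - ij$ is Laman-sparse and extends to a Laman basis $L$ of $G$; by uniqueness of the fundamental $(2,3)$-circuit of $ij$ with respect to $L$ (cited from \cite{LS08}), that circuit must equal $C$, since $C \subseteq L + ij$ is itself a Laman circuit there. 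Condition (3), applied to this $L$, then forces $C$ to have non-trivial $\Gamma$-image.

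The main subtlety is calibrating the role of condition (2). Under the universal reading of ``for any Laman basis $L$'' in (3), conditions (1) and (3) already suffice for the backward direction via the argument above, and (2) is recoverable as a structural consequence of cone-Laman. It is nevertheless included in the statement because it is the form directly verified by our algorithms and because it records the Ross-like structure of fundamental $(2,2)$-circuits in cone-Laman graphs. A backward argument attempting to use (2) in place of (3) would extend $C$ to a $(2,2)$-basis $R$, choose a non-basis edge $f$, and try to place $C$ inside $G' - f$ for $G' = \mathrm{Fund}_R^{(2,2)}(f)$; the obstacle is that a Laman circuit detached from the $(2,2)$-dependency of $G$---for instance, when $G$ is obtained by gluing $C$ to an independent $(2,2)$-circuit at a single vertex---need not lie in any such $G' - f$, so in general (2) and (3) will be used in concert.
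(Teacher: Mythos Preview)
Your forward direction is fine. The issue is the backward direction and the reading of ``for \emph{any}''. You adopt the universal reading, under which your argument that (1) and (3) alone suffice is valid: every Laman circuit $C$ is fundamental with respect to \emph{some} Laman basis, and (3) is assumed to hold for all of them. Under that reading (2) is indeed redundant, being recoverable from the forward implication.

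But that is not the reading the paper intends, nor the one the algorithm needs. As in \lemref{circuits}, the emphasized ``any'' means the equivalence holds for each \emph{fixed} choice of $R$ and $L$: the pebble-game algorithm maintains a single $(2,2)$-basis and a single Laman basis, checks (2) and (3) only against those, and must be able to conclude cone-Laman-ness from that. Under this reading your argument breaks: for each Laman circuit $C$ you construct a tailored basis $L_C$ containing $C-ij$, but the hypothesis only supplies (3) for one fixed $L$, which need not coincide with any $L_C$. \figref{21circuit} is exactly this phenomenon: the pink $K_4$ is a Laman circuit that is not fundamental for the blue basis.

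The paper's proof closes the gap with a structural dichotomy (\lemref{22orlamancircuits}): in a $(2,1)$-graph, every Laman circuit is either contained in some $(2,2)$-circuit, or is a fundamental Laman circuit with respect to \emph{every} Laman basis. The second type is handled by (3) at the fixed $L$. For the first type, \lemref{22circuitsdisjoint} shows that $(2,2)$-circuits in a $(2,1)$-graph are pairwise edge-disjoint, hence each is a fundamental $(2,2)$-circuit of \emph{every} $(2,2)$-basis; condition (2) at the fixed $R$ then forces that $(2,2)$-circuit to be cone-Laman (via \lemref{22circuit}), so the Laman circuit inside it has non-trivial image. Your final paragraph correctly senses that (2) and (3) must work in concert, but the missing ingredient is precisely this dichotomy lemma---without it you cannot route an arbitrary Laman circuit to either the fixed $L$ or the fixed $R$.
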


\paragraph{Order three rotations}
In the special case where the group $\Gamma=\Z/3\Z$,
which corresponds to a cone with opening angle $2\pi/3$, we can give a simpler characterization of
cone-Laman graphs in terms of their \emph{development}.  The development
$\tilde{G}$  is defined by the following construction:
$\tilde{G}$ has three copies of each vertex $i$: $i_0$, $i_1$ and $i_2$;
a directed edge $ij$ with color $\gamma$ then generates three undirected edges
$i_{k}j_{k+\gamma}$ (addition is modulo 3).  See \figref{cone}(c)) for an
example. The development has a free $\Z/3\Z$-action; a subgraph of $\tilde{G}$ is defined to be \emph{symmetric}
if it is fixed by this action. In \appref{orderthree} we prove.
\begin{lemma}\lemlab{orderthree}
Let $(G,\bgamma)$ be a colored graph with $\Gamma=\Z/3\Z$.  Then $(G,\bgamma)$ is a cone-Laman graph
if and only if its development $\tilde{G}$ is a Laman graph.  Moreover, the rigid components of $(G,\bgamma)$
correspond to the symmetric rigid components of $\tilde{G}$.
\end{lemma}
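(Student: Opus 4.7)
The plan is to verify the Laman equivalence combinatorially, then derive the rigid-component correspondence as an immediate consequence. First, the edge counts match: $|V(\tilde G)|=3n$ and $|E(\tilde G)|=3m$, so the Laman equality $|E(\tilde G)|=2|V(\tilde G)|-3$ is equivalent to $m=2n-1$, the cone-Laman equality.

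For the direction $\tilde G$ Laman $\Rightarrow$ $(G,\bgamma)$ cone-Laman, take any $H'\subseteq G$ with $n'$ vertices and $m'$ edges. Because $\Z/3\Z$ has no nontrivial proper subgroups, $\rho(H')$ is either trivial or surjective. In the trivial case, the full lift of $H'$ in $\tilde G$ decomposes into three vertex-disjoint isomorphic copies of $H'$; applying Laman sparsity to a single copy gives $m'\le 2n'-3$. In the surjective case, the full lift is a single connected subgraph with $3n'$ vertices and $3m'$ edges, and Laman gives $3m'\le 6n'-3$, i.e., $m'\le 2n'-1$. These are exactly the cone-Laman inequalities.

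For the converse, assume $(G,\bgamma)$ is cone-Laman. I would induct on $|V(G)|$ to show every $\tilde H\subseteq\tilde G$ satisfies $|E(\tilde H)|\le 2|V(\tilde H)|-3$. Let $\pi\colon\tilde G\to G$ be the projection, set $H=\pi(\tilde H)$, decompose $H$ into connected components $H^{(1)},\ldots,H^{(c)}$, and partition $\tilde H=\bigsqcup_j \tilde H|_j$ accordingly. If $\rho(H^{(j)})$ is trivial, the lift of $H^{(j)}$ is three disjoint copies, and $\tilde H|_j$ further splits into at most three sub-pieces, each a subgraph of $H^{(j)}$ with trivial $\rho$ and hence satisfying $m\le 2n-3$ by cone-Laman sparsity of $G$. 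If $\rho(H^{(j)})=\Z/3\Z$ and $|V(H^{(j)})|<n$, the induction hypothesis applied to the colored subgraph $H^{(j)}$ (still cone-Laman-sparse as a subgraph of $G$) gives that its lift is Laman-sparse, hence so is $\tilde H|_j$. Summing the Laman bounds over components yields $|E(\tilde H)|\le 2|V(\tilde H)|-3$.

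The main obstacle is the remaining case, $H=G$ connected with $\rho(G)=\Z/3\Z$, where the induction does not strictly shrink. Writing $|V(\tilde H)|=3n-v$ and $|E(\tilde H)|=3m-e$ and using $m=2n-1$, the target inequality reduces to $e\ge 2v$. My plan is an orbit-counting argument: each missing vertex lift $i_k$ forces the absence of all $\deg_G(i)$ edge lifts incident to $i_k$, and after reducing $\tilde H$ to an edge-minimal Laman-violator (a Laman-circuit, which has minimum degree at least two), a careful double-count of the forced-missing edge lifts gives $e\ge 2v$; the sharp bookkeeping around edge lifts with both endpoints missing and around low-degree vertices is the principal technical difficulty. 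Once the equivalence is in hand, the rigid-component statement is immediate: the rigid components of $(G,\bgamma)$ are the maximal cone-Laman subgraphs, which under the lift correspond bijectively to the maximal $\Z/3\Z$-symmetric Laman subgraphs of $\tilde G$, i.e., the symmetric rigid components.
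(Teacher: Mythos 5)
Your forward direction ($\tilde{G}$ Laman $\Rightarrow$ $(G,\bgamma)$ cone-Laman) is sound and is essentially the paper's \lemref{orderthree1} together with \corref{conetwotypes}, and your closing paragraph on rigid components is the same routine maximality argument as \lemref{orderthree3}. The problem is the converse, which is where the real content of the lemma lives. You reduce the hard case to showing $e\ge 2v$ for a Laman circuit $\tilde{H}\subseteq\tilde{G}$ with $\pi(\tilde{H})=G$, and then assert that a ``careful double-count of the forced-missing edge lifts'' will give this; you flag the bookkeeping as the principal difficulty, and indeed it does not close as sketched. Counting only the edge lifts forced missing by missing vertex lifts gives $e\ge\tfrac{1}{2}\sum_{i_k\notin \tilde{H}}\deg_{\tilde{G}}(i_k)$, and since a $(2,1)$-tight graph can have vertices of degree $2$ this yields only $e\ge v$ in general (or $e\ge\tfrac{3}{2}v$ if you use that Laman circuits have minimum degree $3$ --- not $2$, as you wrote) --- short of $e\ge 2v$. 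More fundamentally, a count that sees only degrees of deleted vertices cannot suffice: it never uses how the circuit sits relative to the $\Z/3\Z$-action, so the one hypothesis still available in your hard case (that $\rho$ is non-trivial) has no way to enter the argument.

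The paper closes this direction by a different mechanism (\lemref{orderthree2}): given a Laman circuit $\tilde{G}'$ in $\tilde{G}$, pass to its orbit $\mathcal{O}(\tilde{G}')$ under the free $\Z/3\Z$-action. The orbit is symmetric, hence is the full lift of its projection, so \corref{conetwotypes} lets one read off the sparsity count downstairs. Inclusion--exclusion gives $|E(\mathcal{O})|=3|E(\tilde{G}')|-3|E(A)|+|E(B)|$, where $A$ and $B$ are the pairwise and triple intersections of the translates of $\tilde{G}'$; the key point is that $A$ and $B$ are \emph{proper} subgraphs of a Laman circuit and hence Laman-sparse, which bounds $|E(\mathcal{O})|$ from below and shows that $\pi(\mathcal{O})$ violates the $(2,1)$ count (or, when the orbit is disconnected, the $(2,3)$ count). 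That orbit-plus-inclusion--exclusion step is the idea your proposal is missing; without it, or a fully worked substitute for your $e\ge 2v$ claim, the direction ``cone-Laman $\Rightarrow$ development Laman'' is not established.
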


\section{Computing the $\Gamma$-image of $\rho$}\seclab{z2rank}
We now focus on the problem of deciding whether the $\Gamma$-image of the map $\rho$,
defined in \secref{prelim}, is trivial on a colored
graph $(G,\bgamma)$.  The case in which $G$ is not connected follows easily by considering
connected components one at a time, so we assume from now on that $G$ is connected.
Let $(G,\bgamma)$ be a colored graph and $T$ be a spanning tree of $G$ with root $r$.  For a vertex $i$, there is a
unique path $P_i$ in $T$ from $r$ to $i$.  We define $\sigma_{ri}$ to be
\[
\sigma_{ri} = \sum_{\substack{jk\in P_i \\ \text{$jk$ traversed forwards}}} \gamma_{jk} -
\sum_{\substack{jk\in P_i \\ \text{$jk$ traversed backwards}}} \gamma_{jk}
\]
The notation $\sigma_{ri}$ extends in a natural way: for a a vertex $j$ on $P_i$, we define $\sigma_{ij}$ to be $\sigma_{ri} - \sigma_{rj}$;
if $\sigma_{ji}$ is defined, we define $\sigma_{ij} = -\sigma_{ji}$.  The key observation is the following lemma.

\begin{lemma}\lemlab{rholca}
Let $(G,\bgamma)$ be a connected colored graph, let $T$ be a rooted spanning tree of $G$, let $ij$ be an edge of $G$ not in $T$, and let $a$
be the least common ancestor of $i$ and $j$.  Then, if $C$ is the fundamental cycle of $ij$ with respect to $T$,
$\rho(C) = \sigma_{ai} + \gamma_{ij} - \sigma_{ja}$.
\end{lemma}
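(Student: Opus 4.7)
The plan is a direct definitional computation once the cycle $C$ is decomposed at the LCA $a$.

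First, I would split the cycle geometrically. Because $a$ is an ancestor of both $i$ and $j$, both $P_i$ and $P_j$ pass through $a$, so we can write $P_i = P_a\cdot\alpha$ and $P_j = P_a\cdot\beta$, where $\alpha$ is the tree arc from $a$ down to $i$ and $\beta$ is the tree arc from $a$ down to $j$. The tree path in $T$ from $i$ to $j$ is then $\alpha$ traversed in reverse concatenated with $\beta$, and the fundamental cycle $C$ is that tree path closed by the non-tree edge $ij$.

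Second, I would evaluate $\rho(C)$ termwise along a chosen traversal of $C$. By the definition of $\rho$, the three pieces contribute independently. Telescoping along $\alpha$ and $\beta$ and using the definitions of $\sigma_{ri}$, $\sigma_{rj}$, and $\sigma_{ra}$, the signed color sum along $\alpha$ in the $a\to i$ direction is exactly $\sigma_{ri}-\sigma_{ra}=\sigma_{ia}$, and the signed color sum along $\beta$ in the $a\to j$ direction is $\sigma_{rj}-\sigma_{ra}=\sigma_{ja}$. The non-tree edge $ij$ contributes $\pm\gamma_{ij}$ depending on whether the chosen traversal agrees with the directed orientation of $ij$.

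Third, I would assemble the three pieces. With the appropriate orientation and the swap rule $\sigma_{ia}=-\sigma_{ai}$, the three contributions combine into $\sigma_{ai}+\gamma_{ij}-\sigma_{ja}$. As remarked immediately after the definition of $\rho$, the value of $\rho$ on a cycle is only specified up to sign, so any overall sign flip induced by the orientation choice is immaterial. The only point needing care is consistent sign bookkeeping across the reversal of $\alpha$ and the swap $\sigma_{xy}=-\sigma_{yx}$; this is the only potential source of difficulty, but it is routine since the entire argument is a telescoping along the LCA decomposition of $C$.
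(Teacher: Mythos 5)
Your proof is correct and follows essentially the same route as the paper's, whose entire proof is exactly this traversal decomposition of $C$ at the LCA (cross $ij$ from $i$ to $j$, go from $j$ up to $a$, then from $a$ down to $i$); you simply make the telescoping along $\alpha$ and $\beta$ explicit. Your closing remark about sign bookkeeping is apt: the formula as stated matches the paper's own index conventions for $\sigma$ only up to the sign of individual terms, which is harmless because only the triviality of $\rho(C)$ is ever used.
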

\begin{proof}
Traversing the fundamental cycle of $ij$ so that $ij$ is crossed from $i$ to $j$ means: going from $i$ to $j$,
from $j$ to the LCA $a$ of $i$ and $j$ towards the root, and then from $a$ to $i$ away from the root.
\end{proof}

We now show how to compute whether the $\Gamma$-image of a colored graph is trivial in linear time.  The idea
used here is closely related to a folklore $O(n^2)$ algorithm for all-pairs-shortest paths in trees.\footnote{We thank
David Eppstein for clarifying the tree APSP trick's origins on MathOverflow.}
\begin{lemma}\lemlab{image}
Let $(G,\bgamma)$ be a connected colored graph with $n$ vertices and $m$ edges.  There is an $O(n+m)$ time
algorithm to decide whether the $\Gamma$-image of $\rho(G)$ is trivial.
\end{lemma}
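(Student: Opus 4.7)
The plan is to combine \lemref{fundamentalcycles}, which reduces triviality of $\rho(G)$ to triviality on the fundamental cycles of a spanning tree, with \lemref{rholca}, which gives a constant-time formula for $\rho$ on a fundamental cycle provided two ingredients are available: the sums $\sigma_{ri}$ along tree paths from the root, and the LCA of the endpoints of each non-tree edge. The whole algorithm is then preprocessing followed by one $O(1)$ check per non-tree edge.

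Concretely, I would first compute a spanning tree $T$ of $G$ by a BFS or DFS in $O(n+m)$ time and pick an arbitrary root $r$. I would then traverse $T$ once more (say, in DFS order) to compute $\sigma_{ri}$ for every vertex $i$: when we descend along a tree edge $jk$, we set $\sigma_{rk} = \sigma_{rj} + \gamma_{jk}$ if the edge is oriented from $j$ to $k$ in $G$, and $\sigma_{rk} = \sigma_{rj} - \gamma_{kj}$ otherwise. This costs $O(n)$ time in total. Next, I would preprocess $T$ for least common ancestor queries in $O(n)$ time using the Harel--Tarjan data structure cited in \secref{prelim}, so that any LCA query can subsequently be answered in $O(1)$ time.

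With this preprocessing in hand, I would iterate over the $m - n + 1$ non-tree edges. For each non-tree edge $ij$, I would query $a = \operatorname{LCA}(i,j)$, observe that $a$ lies on both $P_i$ and $P_j$, and therefore compute
\[
\sigma_{ai} = \sigma_{ri} - \sigma_{ra}, \qquad \sigma_{ja} = \sigma_{rj} - \sigma_{ra}.
\]
By \lemref{rholca}, the value of $\rho$ on the fundamental cycle of $ij$ with respect to $T$ is $\sigma_{ai} + \gamma_{ij} - \sigma_{ja}$, which we can evaluate in $O(1)$ time and test against the identity of $\Gamma$. If any such value is non-trivial we report that $\rho(G)$ is non-trivial; otherwise, by \lemref{fundamentalcycles}, $\rho(G)$ is trivial. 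The total running time is $O(n+m)$ for spanning tree construction, $\sigma$-labeling, and LCA preprocessing, plus $O(1)$ per non-tree edge, giving $O(n+m)$ overall.

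There is no serious obstacle: the only substantive ingredient is the Harel--Tarjan LCA structure, which we use as a black box, and the verification that $\sigma_{ai}$ and $\sigma_{ja}$ are correctly recovered from root-path sums, which follows immediately from $a$ being an ancestor of both $i$ and $j$ in $T$. The remaining bookkeeping is a routine tree traversal.
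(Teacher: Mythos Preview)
Your proposal is correct and follows essentially the same approach as the paper: build a spanning tree, precompute the root-path sums $\sigma_{ri}$, preprocess with Harel--Tarjan for $O(1)$ LCA queries, and then invoke \lemref{rholca} once per non-tree edge, with correctness coming from \lemref{fundamentalcycles}. If anything, you spell out more of the bookkeeping (how $\sigma_{ri}$ is propagated along tree edges and how $\sigma_{ai}$, $\sigma_{ja}$ are recovered from root-path sums) than the paper does.
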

The rest of this section gives the proof of \lemref{image}.  We first present the algorithm.
\\ \textbf{Input:} A colored graph $(G,\bgamma)$ \\
\textbf{Question:} Is $\rho(G)$ trivial? \\
\textbf{Method:}
\begin{itemize}
\item Pick a spanning tree $T$ of $G$ and root it.
\item Compute $\sigma_{ri}$ for each vertex $i$ of $G$.
\item For each edge $ij$ not in $T$, compute the image of its fundamental cycle in $T$.
\item Say `yes' if any of these images are not the identity and `no' otherwise.
\end{itemize}

\paragraph{Correctness} This is an immediate consequence of \lemref{fundamentalcycles}, since the algorithm checks
all the fundamental cycles with respect to a spanning tree.

\paragraph{Running time}  Finding the spanning tree with BFS is $O(m)$ time, and once the tree
is computed, the $\sigma_{ri}$ can be computed with a single pass over it in $O(n)$ time.
\lemref{rholca} says that the image of any fundamental cycle with respect to $T$ can
be computed in $O(1)$ time once the LCA of the endpoints of the non-tree edge is known.
Using the Harel-Tarjan data structure, the total cost of LCA queries is $O(n+m)$, and the
running time follows.

\paragraph{The pebble game for Ross graphs}\seclab{algo}
We have all the pieces in place to describe our algorithm for the rigidity
problems in Ross graphs.

\paragraph{Algorithm: Rigid components in Ross graphs} ~~ \\
\textbf{Input:} A colored graph $(G,\bgamma)$ with $n$ vertices and $m$ edges. \\
\textbf{Output:} The rigid components of $(G,\bgamma)$. \\
\textbf{Method:}  We will play the pebble game for $(2,3)$-sparse graphs and
the pebble game for $(2,2)$-sparse graphs in parallel.  To start,
we initialize each of these separately, including data structures for
maintaining the $(2,2)$- and $(2,3)$-components.

Then, for each colored edge $ij\in E$:
\begin{itemize}
\item[\textbf{(A)}] If $ij$ is in the span of a $(2,2)$-component in the $(2,2)$-sparse graph we are maintaining, we discard $ij$ and
proceed to the next edge.
\item[\textbf{(B)}] If $ij$ is not in the span of any $(2,3)$-component, we add $ij$ to both the $(2,2)$-sparse and $(2,3)$-sparse
graphs we are building, and update the components of each.
\item[\textbf{(C)}] Otherwise, we use the $(2,3)$-pebble game to identify the smallest $(2,3)$-block $G'$ spanning $ij$.  We add
$ij$ to this subgraph $G'$ and compute its $\Z^2$-image.  If this is trivial, we discard $ij$ and proceed to the next edge.
\item[\textbf{(D)}] If the image of $G'$ was non-trivial, add $ij$ to the $(2,2)$-sparse graph we are maintaining and update its
rigid components.
\end{itemize}

The output is the (2,2)-components in the $(2,2)$-sparse graph we built.

\paragraph{Correctness}  By definition, the rigid components of a Ross graph are its $(2,2)$-components.  Step \textbf{(A)}
ensures that we maintain a $(2,2)$-sparse graph; steps \textbf{(B)} and \textbf{(C)}, by \lemref{circuits} imply that when
new $(2,2)$-blocks are formed \emph{all} of them have non-trivial $\Z^2$-image, which is what is required for Ross-sparsity.
Step \textbf{(D)} ensures that the rigid components are updated at every step.  The matroidal property implies that a
greedy algorithm is correct.

\paragraph{Running time} By \cite{LS08,LST05}, steps \textbf{(A)}, \textbf{(B)}, and \textbf{(D)} require
$O(n^2)$ time over the entire run of the algorithm
(the analysis of the time taken to update components is amortized).  Step \textbf{(C)}, by \cite{LS08} and
\lemref{rholca} requires $O(n)$ time.  Since $\Omega(m)$ iterations may enter step \textbf{(C)}, this becomes the
bottleneck, resulting in an $O(nm)$ running time, which is $O(n^3)$.

\paragraph{Modifications for other rigidity problems}  We have presented and analyzed an algorithm for computing the rigid
components in Ross graphs.  Minor modifications give solutions to the \textbf{Decision} and \textbf{Extraction} problems.
For \textbf{Extraction}, we just return the $(2,2)$-sparse graph we built; the running time remains $O(n^3)$.  For
\textbf{Decision}, we simply stop and say `no' if any edge is ever discarded.  Since we process
at most $O(n)$ edges, the running time becomes $O(n^2)$.

\section{Pebble games for cone-Laman graphs} We now describe our algorithms for cone-Laman graphs.

\paragraph{Order-three rotations}
We start with the special case when the group $\Gamma=\Z/3\Z$.  In this case, the
following algorithm's correctness is immediate from \lemref{orderthree}.
The running time follows from \cite{LS08,LST05,BJ03} and the fact that the development can
be computed in linear time.

\textbf{Input:} A colored graph $(G,\bgamma)$ with $n$ vertices and $m$ edges. \\
\textbf{Output:} The rigid components of $(G,\bgamma)$. \\
\textbf{Method:}
\begin{itemize}
\item[\textbf{(A)}] Compute the development $\tilde{G}$ of $(G,\bgamma)$.
\item[\textbf{(B)}] Use the $(2,3)$-pebble game to compute the rigid components of $\tilde{G}$.
\item[\textbf{(C)}] Return the subgraphs of $G$ corresponding to the symmetric rigid components in $\tilde{G}$.
\end{itemize}

\paragraph{General cone-Laman graphs}
For colored graphs with $\Gamma=\Z/k\Z$, we don't have an analogue of \lemref{orderthree}, and the
development may not be polynomial size.  However, we can modify our pebble game
for Ross graphs to compute the rigid components.  Here is the algorithm:
\textbf{Input:} A colored graph $(G,\bgamma)$ with $n$ vertices and $m$ edges, and an integer $k$. \\
\textbf{Output:} The rigid components of $(G,\bgamma)$. \\
\textbf{Method:}  We initialize a $(2,1)$-pebble game, a $(2,2)$-pebble game,
and a $(2,3)$-pebble game.  Then, for each edge	$ij\in E(G)$:
\begin{itemize}
\item[\textbf{(A)}] If $ij$ is in the span of a $(2,1)$-component in the $(2,1)$-sparse graph we are maintaining,
we discard $ij$ and proceed to the next edge.
\item[\textbf{(B)}] If $ij$ is not in the span of any $(2,3)$-component, we add $ij$ to all three sparse
graphs we are building,  update the components of each, and proceed to the next edge.
\item[\textbf{(C)}] If $ij$ is not in the span of any $(2,2)$-component, we check that its fundamental
Laman circuit in the $(2,3)$-sparse graph has non-trivial $Z/k\Z$-image.  If not, discard $ij$.  Otherwise, add
$ij$ to the $(2,1)$- and $(2,2)$-sparse graphs and update components.
\item[\textbf{(D)}]  Otherwise $ij$ is not in the span of any $(2,1)$-component.  We find the minimal $(2,2)$-block $G'$
spanning $ij$ and check if $G'+ij$ becomes a Ross graph after removing \emph{any} edge.  If so, add $ij$ to the
$(2,1)$-graph we are building.  Otherwise discard $ij$.
\end{itemize}

The output is the $(2,1)$-components in the $(2,1)$-sparse graph we built.

\paragraph{Analysis}  The proof of correctness follows from \lemref{conecircuits} and an argument similar to the
one used to show that the pebble game for Ross graphs is correct.  Each loop iteration takes $O(n^3)$ time,
from which the claimed running times follow.

\section{Conclusions and remarks}\seclab{conclusions}
We studied the three main algorithmic rigidity questions for generic
fixed-lattice periodic frameworks and cone frameworks.  We gave algorithms based on the
pebble game for each of them.  Along the way we introduced several new ideas: a
linear time algorithm for computing the $\Gamma$-image of a colored graph, a
characterization of Ross graphs in terms of Laman circuits, and a characterization of cone-Laman graphs
in terms of the development for $k=3$ and Ross graphs for general $k$.

\paragraph{Implementation issues} The pebble game has become \emph{the} standard algorithm in the rigidity
modeling community because of its elegance, ease of implementation, and reasonable implicit constants.  The original data
structure of Harel and Tarjan \cite{HT84}, unfortunately, is too complicated to be of much use except as a theoretical tool.
More recent work of Bender and Farach-Colton \cite{BF00} gives a vastly simpler data structure for $O(1)$-time LCA that
is not much more complicated than the \emph{union pair-find} data structure of \cite{LST05} used in the pebble game.
This means that the algorithm presented here is implementable as well.

\small

\newpage

\appendix

\section{Details for \lemref{circuits}}\seclab{circuits}
In this appendix, we prove \lemref{circuits}.  We start off with some additional facts about
Laman graphs and circuits that are needed.

\paragraph{Additional facts about Laman graphs}
The matroidal property \cite[Theorem 2]{LS08} of Laman graphs implies that if $G$ is a graph with Laman basis $L$, \emph{any}
Laman circuit in $G$ can be generated by a sequence of ``circuit elimination'' steps starting from the fundamental Laman circuits
with respect to $L$; circuit elimination generates a new Laman circuit from two that overlap by discarding some edges from the
intersection.

The matroidal property implies that when all the Laman circuits in a graph are disjoint,
they are all fundamental circuits, independent of any choice of Laman basis.
\begin{lemma}\lemlab{elim}
Let $G$ be a graph and suppose that the Laman circuits in $G$ are all edge disjoint.  Then,
all Laman bases of $G$ have the same fundamental circuits, and every Laman circuit in $G$
is a fundamental circuit.
\end{lemma}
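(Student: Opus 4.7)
My plan is to prove the lemma directly from the definitions, without invoking the circuit elimination axiom of the matroid. The heart of the argument is the observation that, under the edge-disjointness hypothesis, any two Laman circuits that share an edge must in fact coincide.

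The main step is to show that every Laman circuit $C$ in $G$ is a fundamental Laman circuit with respect to every Laman basis $L$. Since $C$ is not $(2,3)$-sparse and $L$ is, $C$ cannot lie entirely inside $L$, so there is some edge $e \in E(C) - E(L)$. By definition the fundamental Laman circuit $C'$ of $e$ with respect to $L$ is the unique Laman circuit contained in $L + e \subseteq E(G)$, and in particular $C'$ contains $e$. Now $C$ and $C'$ are both Laman circuits of $G$ with $e \in E(C) \cap E(C')$, so the edge-disjointness hypothesis forces $C = C'$. Hence $C$ is the fundamental Laman circuit of $e$ with respect to $L$.

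The statement about all Laman bases having the same fundamental circuits is then free: for any Laman basis $L$, the family of fundamental Laman circuits with respect to $L$ is trivially contained in the family of all Laman circuits of $G$, and the main step shows that this inclusion is an equality. The right-hand side has no dependence on $L$, so any two Laman bases yield the same set of fundamental circuits.

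I do not expect any real obstacle. The only subtle point is confirming that some edge of $C$ lies outside $L$, which follows immediately from $L$ being $(2,3)$-sparse while $C$ is not. In particular, the ``circuit elimination'' generation of Laman circuits mentioned in the preceding paragraph is not needed for the proof itself; disjointness is strong enough to collapse the argument to a single step.
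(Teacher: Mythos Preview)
Your proof is correct and, in fact, more direct than the paper's. The paper invokes the general matroid fact that every circuit can be generated from the fundamental circuits of a basis by a sequence of circuit elimination steps, and then observes that edge-disjointness of the circuits blocks every such step, so the fundamental circuits already exhaust all circuits. You bypass this machinery entirely: given any Laman circuit $C$, you pick an edge $e\in E(C)\setminus E(L)$ (which exists because $C$ is not $(2,3)$-sparse while $L$ is), note that the fundamental circuit $C'$ of $e$ also contains $e$, and conclude $C=C'$ from the disjointness hypothesis. The only point worth making explicit is that $e$ indeed has a fundamental circuit, i.e.\ that $L+e$ is dependent; this holds because $L$ is a \emph{maximal} $(2,3)$-sparse subgraph of $G$ and $e\in E(G)$. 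Your argument is shorter and uses only the uniqueness of fundamental circuits rather than the full circuit-elimination description of the circuit set; the paper's version has the minor advantage of tying the lemma to the matroidal viewpoint emphasized elsewhere, but for the purposes of this lemma your route is cleaner.
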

\begin{proof}
Let $L$ be a Laman basis of $G$.  The matroidal property implies that all the Laman circuits in $G$ are either
fundamental Laman circuits with respect to $L$ or can be generated by circuit elimination moves.  By
hypothesis, all the Laman circuits in $G$, and therefore all the fundamental circuits with respect to $L$,
are edge disjoint.  This means that there are no circuit elimination steps possible, forcing every Laman circuit in $G$
to be a fundamental circuit with respect to $L$.  This proves the second part of the Lemma.  Since $L$ was arbitrary,
the first part follows at once.
\end{proof}

\paragraph{Proof of \lemref{circuits}}
Let $(G,\bgamma)$ satisfy the assumptions of the lemma.  We start with the observation that every $(2,2)$-block $G'$ in $G$
must contain a Laman circuit:  a Laman basis for $G'$ cannot contain every edge of $G'$ (it has one too many), so there is a
fundamental Laman circuit with respect to this basis.  But then if \emph{any} $(2,2)$-block $G'$ in $G$ has trivial $\Z^2$-image, then
so do all its subgraphs, which must include a Laman circuit.  This implies that $(G,\bgamma)$ is Ross if and only if \emph{every} Laman
circuit has non-trivial $\Z^2$-image.

To complete the proof, we need to show that it is sufficient to restrict ourselves to the fundamental Laman circuits of
\emph{any} Laman basis $L$ of $G$.  To do this, we note that Laman circuits are $(2,2)$-blocks in $G$ that, by definition,
do not contain any strictly smaller $(2,2)$-blocks.  Now we make use of the hypothesis that $G$ is a $(2,2)$-graph:
the structure theorem for $(k,\ell)$-graphs \cite[Theorem 5]{LS08} says that \emph{any} pair of $(2,2)$-blocks in $G$ either has
no edge intersection or intersects on a $(2,2)$-block.  It then follows, since they can't contain smaller
$(2,2)$-blocks that the Laman circuits in $G$ are all edge disjoint.  \lemref{elim} now applies, completing
the proof.
\hfill $\square$

\section{Details for \lemref{conecircuits}}\seclab{conecircuits}
This appendix provides the proof of \lemref{conecircuits}.  Analogously to the case
of Ross graphs, $(G,\bgamma)$ will turn out to be cone-Laman if and only if $G$ is a
$(2,1)$-graph and all Laman circuits have non-trivial $\Gamma$-image.  The
difficulty, as illustrated in \figref{21circuit}, is that because $G$ is not
$(2,2)$-sparse, we \emph{can't} pick a Laman basis arbitrarily and then
look only at fundamental Laman circuits.

\begin{figure}[htbp]
\centering
\includegraphics[width=.75\columnwidth]{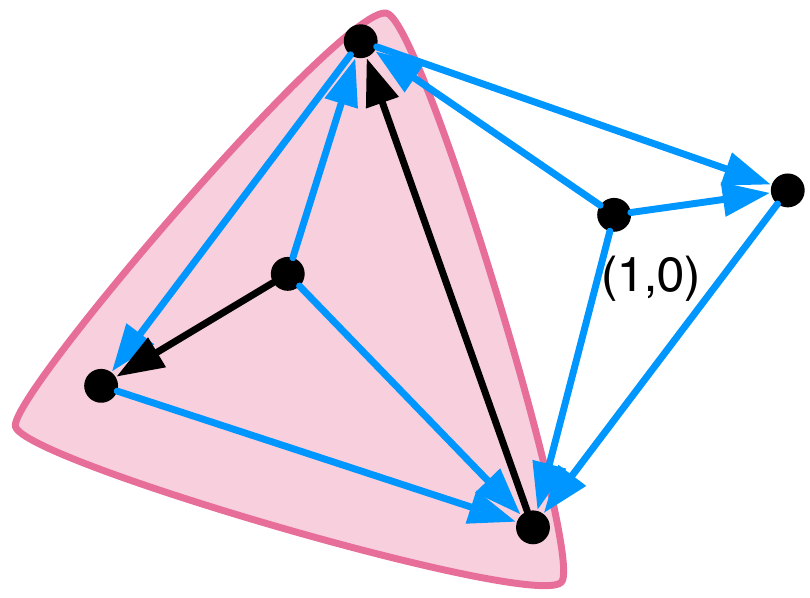}
\caption{A colored graph that is not cone-Laman: the underlying graph is a $(2,1)$-graph,
but there is a $K_4$ subgraph (indicated in pink)	with trivial $\Z^2$-image.
With respect to the Laman basis indicated by blue edges, it is \emph{not}
a fundamental circuit.}
\label{fig:21circuit}
\end{figure}
To get around this problem, we will reduce to the case when $G$ is a $(2,2)$-circuit;
i.e., a $(2,1)$-graph such that after removing \emph{any} edge from $G$, the result is a $(2,2)$-graph.
\begin{lemma}\lemlab{22circuit}
Let $(G,\bgamma)$ be a colored graph with $\Gamma=\Z/k\Z$, and $G$ a $(2,2)$-circuit.  Then $(G,\bgamma)$ is
cone-Laman if and only if removing \emph{any} edge from $G$ results in a Ross-graph.
\end{lemma}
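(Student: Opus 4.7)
The plan is to verify each condition defining cone-Laman graphs (respectively Ross graphs) in each direction of the biconditional, leaning on the fact that a $(2,2)$-circuit has exactly $2n-1$ edges and becomes a $(2,2)$-graph (with $2n-2$ edges) upon removing any single edge. So the numerics line up naturally: the edge count for Ross is exactly one less than for cone-Laman, mirroring the $(2,2)$-circuit structure.

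For the forward direction, I would start with the assumption that $(G,\bgamma)$ is cone-Laman and fix an arbitrary edge $e\in E(G)$. Since $G$ is a $(2,2)$-circuit, $G-e$ is a $(2,2)$-graph on $n$ vertices with $2n-2$ edges, so the top-level edge count of the Ross definition holds. The sparsity condition for subgraphs of trivial $\Gamma$-image is inherited directly from condition (2) of cone-Laman, since any subgraph of $G-e$ is also a subgraph of $G$. The condition for non-trivial image subgraphs, namely $m'\le 2n'-2$, is then immediate because $G-e$ is $(2,2)$-sparse, which already applies to all of its subgraphs regardless of image.

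For the converse, I would assume $(G-e,\bgamma)$ is Ross for every $e\in E(G)$ and check each of the cone-Laman conditions in turn. Condition (1) is free since $G$ is a $(2,2)$-circuit and these are $(2,1)$-graphs. Condition (3), that non-trivial image subgraphs satisfy $m'\le 2n'-1$, is immediate from $(2,1)$-sparsity of $G$. For condition (2), any proper subgraph $G'\subsetneq G$ sits inside some $G-e$ with $e\notin E(G')$, and since $G-e$ is Ross by hypothesis, the trivial-image condition $m'\le 2n'-3$ transfers to $G'$.

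The one remaining subtlety, and what I view as the only real content in the argument, is ruling out the case $G'=G$ in condition (2): that is, showing that $G$ itself must have non-trivial $\Gamma$-image. I would handle this by contradiction. If $\rho(G)$ were trivial, then for any edge $e$, every cycle of $G-e$ would lift to a cycle of $G$ with trivial image, so $\rho(G-e)$ would also be trivial. But $G-e$ has $2n-2$ edges on $n$ vertices, and the Ross trivial-image count would force $2n-2\le 2n-3$, a contradiction. Therefore $\rho(G)$ is non-trivial, the $G'=G$ case of condition (2) is vacuous, and the reverse direction is complete.
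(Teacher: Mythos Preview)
Your proof is correct and essentially matches the paper's argument. The only difference is cosmetic: in the reverse direction the paper first reduces to showing that every $(2,2)$-block in $G$ has non-trivial $\Gamma$-image (and observes these are automatically \emph{proper} subgraphs of a $(2,2)$-circuit, so the ``remove an edge outside $G'$'' trick applies uniformly), whereas you verify the cone-Laman conditions directly and dispose of the case $G'=G$ by a separate short contradiction---both routes rest on the same core observation.
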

\begin{proof}
One direction is straightforward: if there is some edge $ij$ such that removing from $G$ it leaves a
graph that is not Ross, then $G-{ij}$ must have some subgraph with trivial $\Gamma$-image that is not
Laman-sparse.  Since this subgraph is also a subgraph of $G$, this shows that $G$ is not cone-Laman.

For the other direction, we start by noting again that $(G,\bgamma)$ is cone-Laman if and only if every Laman
circuit in $G$ has non-trivial $\Gamma$-image.  Laman circuits are a subset of the $(2,2)$-blocks in $G$,
and we will show that every $(2,2)$-block in $G'$ has non-trivial $\Gamma$-image when the hypothesis of the Lemma
are met.  Let $G'$ be a $(2,2)$-block in $G$.  Since $G$ has one edge too many to be a $(2,2)$-graph, $G'$ is not
all of $G$.  Removing an edge $ij$ not in $G'$ leaves a subgraph $G-ij$ that is, by hypothesis, Ross, so $G'$
has non-trivial $\Gamma$-image.  Since $G'$ was arbitrary, we are done.
\end{proof}

In addition to the key \lemref{22circuit}, we also need two other additional facts about $(2,1)$-graphs.
\begin{lemma}\lemlab{22circuitsdisjoint}
Let $G$ be a $(2,1)$-graph.  Then the $(2,2)$-circuits in $G$ are edge disjoint.
\end{lemma}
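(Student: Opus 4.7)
My plan is to prove the edge-disjointness of $(2,2)$-circuits in a $(2,1)$-graph $G$ by a direct inclusion-exclusion contradiction. Suppose for contradiction that $G_1$ and $G_2$ are two distinct $(2,2)$-circuits in $G$ that share at least one edge, and let $n_i,m_i$ denote their vertex and edge counts and $n_{12},m_{12}$ those of their intersection (taken with vertex set $V(G_1)\cap V(G_2)$ and edges $E(G_1)\cap E(G_2)$).

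First I would record the basic counts: as noted in the preliminaries, every $(2,2)$-circuit is a $(2,1)$-graph on its own vertex set, so $m_i=2n_i-1$, and by edge-minimality every proper subgraph of $G_i$ is $(2,2)$-sparse. Since $G_1\neq G_2$ and edge-minimality forbids one circuit from being a proper subgraph of the other, the intersection $G_1\cap G_2$ is a proper subgraph of each $G_i$. Because $G_1\cap G_2$ contains at least one edge, this gives the first key inequality
\[
m_{12}\le 2n_{12}-2.
\]

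Next I would apply the $(2,1)$-sparsity hypothesis on $G$ to the subgraph $G_1\cup G_2$, obtaining $m(G_1\cup G_2)\le 2n(G_1\cup G_2)-1$. Substituting the inclusion-exclusion identities $m(G_1\cup G_2)=(2n_1-1)+(2n_2-1)-m_{12}$ and $n(G_1\cup G_2)=n_1+n_2-n_{12}$ and simplifying, the $n_i$'s cancel and one is left with the opposite inequality
\[
m_{12}\ge 2n_{12}-1,
\]
which contradicts the bound from the previous paragraph. Hence no two distinct $(2,2)$-circuits can share an edge.

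There is really no serious obstacle: the argument is a short structural calculation of the same inclusion-exclusion flavor that underlies the proof that $(k,\ell)$-blocks are closed under union in \cite{LS08}. The only mildly subtle point is justifying that $G_1\cap G_2$ is a \emph{proper} subgraph of each $G_i$ (so that $(2,2)$-sparsity applies), and this is precisely the content of the edge-minimality clause in the definition of a circuit, which rules out $G_1\subseteq G_2$ or $G_2\subseteq G_1$ for distinct circuits.
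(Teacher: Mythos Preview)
Your proof is correct and is essentially the same argument as the paper's: the paper simply cites the $(k,\ell)$-graph structure theorem \cite[Theorem~5]{LS08} together with minimality of circuits, whereas you unpack the underlying inclusion--exclusion count directly. As you yourself note, your computation is precisely what proves that structure theorem, so the two proofs differ only in whether the block-intersection step is quoted as a black box or written out.
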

\begin{proof}
This is a simple application of minimality of circuits, and the $(k,\ell)$-graph structure theorem
\cite[Theorem 5]{LS08}, similar to the case of Laman-circuits in a $(2,2)$-graph.
\end{proof}

\begin{lemma}\lemlab{22orlamancircuits}
Let $G$ be a $(2,1)$-graph, and let $G'$ be a Laman circuit in $G$.  Then either
$G'$ is contained in a $(2,2)$-circuit, or $G'$ is a fundamental Laman circuit
with respect to any Laman-basis of $G$.
\end{lemma}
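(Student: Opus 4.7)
The plan is to prove the contrapositive: if $G'$ is a Laman circuit in $G$ and $|E(G')\setminus E(L)|\ge 2$ for some Laman basis $L$ of $G$, then $G'$ is contained in a $(2,2)$-circuit.  The main tool is a \emph{containment dichotomy}: for any $(2,2)$-circuit $C$ of $G$, either $G'\subseteq C$ or $G'$ and $C$ share no edges.  This follows from a short count: $G'\cup C\subseteq G$ is $(2,1)$-sparse, so $|E(G'\cup C)|\le 2|V(G'\cup C)|-1$; plugging in $|E(G')|=2|V(G')|-2$ and $|E(C)|=2|V(C)|-1$ and simplifying gives $|E(G'\cap C)|\ge 2|V(G'\cap C)|-2$.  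Since $G'$ is itself $(2,2)$-sparse (a Laman circuit is a $(2,2)$-graph), $G'\cap C$ must be $(2,2)$-tight; and since a Laman circuit admits no \emph{proper} $(2,2)$-tight subgraph (any such would violate $(2,3)$-sparsity, contradicting the edge-minimality of $G'$), either $G'\cap C$ is edge-empty or $G'\cap C=G'$.

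Next, pick $e_1,e_2\in E(G')\setminus E(L)$ and let $F_1,F_2$ be the fundamental Laman circuits of $e_1,e_2$ with respect to $L$.  I claim some pair of fundamental circuits of edges in $E(G')\setminus E(L)$ must share an edge.  If all such pairs were edge-disjoint, then the submatroid on $L\cup(E(G')\setminus E(L))$ would decompose as a direct sum whose only circuits are the fundamental ones themselves; but $G'$ is a circuit in this submatroid containing every $e_i\in E(G')\setminus E(L)$, while each fundamental circuit contains only one $e_i$, a contradiction.  So fix $F_1,F_2$ with $F_1\cap F_2$ edge-nonempty.  A count analogous to the dichotomy, using $(2,1)$-sparsity of $G$ together with $F_1\cap F_2\subsetneq F_1$ (so $|E(F_1\cap F_2)|\le 2|V(F_1\cap F_2)|-3$ since $F_1$ has no proper $(2,2)$-tight subgraph), forces $F_1\cup F_2$ to be $(2,1)$-tight.

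Finally, I show $F_1\cup F_2$ is itself a $(2,2)$-circuit.  Being $(2,1)$-tight, $F_1\cup F_2$ is not $(2,2)$-sparse and hence contains some $(2,2)$-circuit $C_0$.  The dichotomy applied to each of $F_1,F_2$ against $C_0$ forces $C_0=F_1\cup F_2$: the case ``both $F_i\subseteq C_0$'' gives $F_1\cup F_2\subseteq C_0$; ``both $F_i$ edge-disjoint from $C_0$'' gives $E(C_0)\cap(E(F_1)\cup E(F_2))=\emptyset$, forcing $E(C_0)=\emptyset$; and the mixed case, say $F_2\subseteq C_0$ and $F_1$ edge-disjoint from $C_0$, contradicts $F_1\cap F_2\subseteq F_2\subseteq C_0$ being edge-nonempty while $F_1\cap C_0$ is empty.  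Hence $C_0=F_1\cup F_2$ is a $(2,2)$-circuit; applying the dichotomy once more to $G'$ and $F_1\cup F_2$, noting $e_1,e_2\in E(G')\cap E(F_1\cup F_2)$, yields $G'\subseteq F_1\cup F_2$, contradicting the hypothesis.  The main obstacle is this last structural step, where the dichotomy must be applied carefully to rule out every proper $(2,2)$-circuit inside $F_1\cup F_2$.
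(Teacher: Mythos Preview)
Your argument is correct, modulo the very last clause: after showing $G'\subseteq F_1\cup F_2$ and that $F_1\cup F_2$ is a $(2,2)$-circuit, you have \emph{established} the contrapositive, not ``contradicted the hypothesis.''  The containment dichotomy, the matroid argument forcing two overlapping fundamental Laman circuits, the count making $F_1\cup F_2$ $(2,1)$-tight, and the case analysis pinning down $C_0=F_1\cup F_2$ are all sound.

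The paper proceeds quite differently and more briefly.  It fixes an arbitrary Laman basis $L$, extends it to a $(2,2)$-basis $R$, and splits on whether $G'$ meets \emph{any} $(2,2)$-circuit of $G$.  If it does, the same counting dichotomy you prove (your first paragraph) immediately gives $G'\subseteq G''$ for that $(2,2)$-circuit $G''$.  If it does not, then every edge of $G'$ lies in $R$ (since each edge outside $R$ lies in its own fundamental $(2,2)$-circuit), so $G'$ is a Laman circuit inside the $(2,2)$-sparse graph $R$, and the proof of \lemref{circuits} already shows that in a $(2,2)$-sparse graph all Laman circuits are edge-disjoint and hence fundamental with respect to \emph{every} Laman basis---in particular $L$.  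So the paper never needs to manufacture a $(2,2)$-circuit; it just notes that one either exists and swallows $G'$, or none touches $G'$ and the earlier lemma applies.

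What each approach buys: the paper's route is shorter and reuses \lemref{circuits} and \lemref{elim} as black boxes, at the cost of being nonconstructive about which $(2,2)$-circuit contains $G'$.  Your route is self-contained and actually exhibits the $(2,2)$-circuit as $F_1\cup F_2$, which is a nice structural observation in its own right (and could be useful algorithmically), but requires the extra matroid step about pairwise-disjoint fundamental circuits and the case analysis on $C_0$.
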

\begin{proof}
Let $L$ be an arbitrary Laman basis and extend it to a $(2,2)$-basis $R$.  If $G'$ is edge
disjoint from all $(2, 2)$-circuits, then $G' \subset R$.  By the proof of \lemref{circuits},
any such Laman circuit is a fundamental circuit of $L$.

Suppose instead that $G'$ instersects a $(2,2)$-circuit $G''$ in at least one edge.
Let $n'$ and $n''$ be the number of vertices spanned by each subgraph, and
let $m'$ and $m''$ be the number of edges.  We define $n_\cup$, $n_\cap$, $m_\cup$, and $m_\cap$
similarly for the intersection and union of $G'$ and $G''$.
Because $G'$ is a $(2,2)$-graph, we get the sequence of inequalities
\begin{eqnarray}
2n_\cap - 2 & \ge m_\cap = 2n'-2+2n''-1 - m_\cup \\
& \ge 2n'-2+2n''-1 - 2n_\cup + 1 \\
& = 2n_\cap-2
\end{eqnarray}
Since any proper subgraph of $G'$ is Laman-sparse, we must have $G' \cap G'' = G'$.
\end{proof}

\paragraph{Proof of \lemref{conecircuits}}
It is enough to prove that every Laman circuit in $G$ has non-trivial $\Gamma$-image.
\lemref{22orlamancircuits} says that there are two types: those that don't intersect
any other Laman circuits, which are fundamental Laman circuits for any Laman basis; the
other type are all subgraphs of $(2,2)$-circuits, all of which are edge-disjoint by
\lemref{22circuitsdisjoint}.

The Lemma then follows by \lemref{22circuit}.
\hfill $\square$

\section{Details for \lemref{orderthree}}\seclab{orderthree}

In this appendix we prove \lemref{orderthree}. First we start with some
preliminaries, including a formal definition of the development and some facts
about $\Z/3\Z$-rank.

\paragraph{The development and covering map}
Let $(G,\bgamma)$ be a colored graph with colors in $\Z/3\Z$.  We define the \emph{development}
$\tilde{G}$ of $(G,\bgamma)$ to be the undirected graph resulting from
the following construction. For every vertex $i \in V(G)$,
$V(\tilde{G})$ has three elements, $i_0, i_1, i_2$ and for every edge $ij
\in E(G)$ (where $j$ is the head), $E(\tilde{G})$ has three elements,
$i_0j_{0+\gamma_{ij}}, i_1j_{1+\gamma_{ij}}, i_2j_{2+\gamma_{ij}}$ where
$\gamma_{ij}$ is the color of edge $ij$.
Arithmetic is performed modulo $3$. We observe that
$\tilde{G}$ has exactly three times as many edges and vertices as $G$.

Given a $\Z/3\Z$-colored graph $(G,\bgamma)$ and its development $\tilde{G}$, there
is a natural covering map $\pi: \tilde{G}\to G$ that sends $i_{\gamma}\in V(\tilde{G})$  to
$i\in V(G)$ and an edge $i_{\gamma_i}j_{\gamma_j}\in E(\tilde G)$ to $ij\in E(G)$.  The
pre-image $\pi^{-1}(i)$ is defined to be the \emph{fiber over $i$}; the fiber over an edge $ij$
is defined similarly.

\paragraph{Graphs with a free $\Z/3\Z$-action}
A \emph{graph automorphism} $\alpha$ of a graph $G=(V,E)$ is a bijection
between $V$ and itself that preserves edges; i.e., $\alpha: V\to V$ is an
automorphism if and only if $\alpha$ is a permutation and $ij\in E$
implies that  $\alpha(i)\alpha(j)$ is also in $E$.  The automorphisms
of $G$ naturally form a group.

A graph $G$ is defined to admit a \emph{free $\Z/3\Z$-action} if there is a
faithful representation of $\Z/3\Z$ by automorphisms $\alpha_i$, $i\in \{0,1,2\}$
of $G$ that act without fixed points, except for the identity.
If $G$ has a free $\Z/3\Z$-action and $G'$
is a subgraph of $G$ then the \emph{orbit} $\mathcal{O}(G')$ is defined to be
$\mathcal{O}(G') = G' \cup \alpha_1(G') \cup \alpha_2(G')$.  A subgraph $G'$
of $G$ is defined to be \emph{symmetric} if it coincides with its orbit.

\begin{lemma}
Let $(G,\bgamma)$ be a $\Z/3\Z$-colored graph and let $\tilde{G}$ be the development.
Then $\tilde{G}$ has a free $\Z/3\Z$-action.
\end{lemma}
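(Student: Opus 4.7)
The plan is to exhibit the action explicitly and verify each required property in turn. On vertices, I would define, for each $k\in\Z/3\Z$, the map $\alpha_k: V(\tilde{G}) \to V(\tilde{G})$ by $\alpha_k(i_\ell) = i_{\ell+k}$, with the subscript read modulo $3$. This is clearly a bijection on vertices with inverse $\alpha_{-k}$, and $\alpha_k \circ \alpha_{k'} = \alpha_{k+k'}$, so the $\alpha_k$ form a $\Z/3\Z$-action on $V(\tilde{G})$.

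Next I would check that each $\alpha_k$ extends to a graph automorphism, i.e.\ that it sends edges to edges. By the definition of the development, every edge of $\tilde{G}$ has the form $i_\ell j_{\ell+\gamma_{ij}}$ for some edge $ij\in E(G)$ and some $\ell\in\Z/3\Z$. Applying $\alpha_k$ to the endpoints gives the pair $i_{\ell+k}j_{\ell+k+\gamma_{ij}}$, which is again one of the three edges in $\pi^{-1}(ij)$, hence an edge of $\tilde{G}$. Conversely, since $\alpha_k$ is a bijection with inverse $\alpha_{-k}$, the same argument applied to $\alpha_{-k}$ shows $\alpha_k$ is surjective on edges. Thus $\alpha_k$ is a graph automorphism of $\tilde{G}$.

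Faithfulness and freeness are then both immediate from how the subscripts transform. If $k\neq 0$ in $\Z/3\Z$, then for \emph{every} vertex $i_\ell\in V(\tilde{G})$ we have $\alpha_k(i_\ell) = i_{\ell+k} \neq i_\ell$, since $\ell+k\neq \ell$ modulo $3$. So $\alpha_k$ has no fixed vertices whenever $k\neq 0$, which simultaneously gives freeness and shows that the representation $k\mapsto \alpha_k$ is faithful (the kernel is trivial since only $\alpha_0$ is the identity).

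I do not expect any real obstacle: the statement is essentially unpacking the definition of the development, which was set up so that the three vertices $i_0, i_1, i_2$ in each fiber are cyclically permuted in a way compatible with the edge rule $\ell \mapsto \ell+\gamma_{ij}$. The only place where care is needed is in the edge-preservation check, where one must notice that shifting both subscripts $\ell$ and $\ell+\gamma_{ij}$ by the same amount $k$ preserves their difference, which is exactly the color $\gamma_{ij}$; this is what makes $\alpha_k$ respect the edge set.
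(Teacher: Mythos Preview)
Your proof is correct and follows essentially the same approach as the paper: define the action by $\alpha_k(i_\ell)=i_{\ell+k}$, check that it preserves edges because shifting both subscripts by $k$ keeps the difference $\gamma_{ij}$ intact, and observe that a nonzero shift moves every vertex. If anything, your write-up is slightly more careful than the paper's, since you explicitly verify the composition law and the bijectivity via $\alpha_{-k}$.
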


\begin{proof}
Define $\alpha_z : V(\tilde{G})\to V(\tilde{G})$ to be $i_{\gamma}\mapsto i_{\gamma+z}$ for $z\in 0,1,2$.
These functions are clearly permutations of $V(\tilde{G})$ that have no fixed points, except for $\alpha_0$.
Since $\alpha_0(i) = \alpha_1(i)+\alpha_2(i)$ they represent $\Z/3\Z$.  To see that they are automorphisms,
we note that the the fibers of any vertex or edge of $G$ are closed under the action of the $\alpha_i$
by the definition of the developing map: this is clear for vertices and for
edges, if $i_{\gamma}j_{\gamma+\gamma_{ij}}$ is an edge of $\tilde{G}$, then $\tilde{G}$ also has an edge $i_{\gamma+z}j_{\gamma+\gamma_{ij}+z}$, which is also
in the fiber over $ij$.
\end{proof}

\paragraph{Facts about the development}
The essence of \lemref{orderthree} is that we can
read out sparsity properties and the $\Gamma$-image of
subgraphs of the colored graph $(G,\bgamma)$ by looking at the
development.  The next few lemmas make the correspondence
precise.

\begin{lemma}\lemlab{Z3path}
Let $(G,\bgamma)$ be a $\Z/3\Z$-colored graph, and let $G'$
be a subgraph of $G$.  Then $G'$ has non-trivial $\Z/3\Z$-image
if and only if the lift $\pi^{-1}(G')$ has a path from
some vertex $i_\gamma$ in the fiber over $i\in V(\tilde{G})$
to another vertex $i_{\gamma'}$ in the same fiber.
\end{lemma}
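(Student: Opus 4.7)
The plan is to prove the two directions by tracking the fiber index as one traverses lifts of walks in $G'$. Both directions rest on the observation that traversing an edge $ij$ with color $\gamma_{ij}$ in the lift changes the fiber index by exactly $\pm \gamma_{ij}$ (depending on orientation), so lifting a walk in $G'$ starting at $i_0$ ends at $i_s$ where $s$ is precisely the signed sum of colors along the walk.

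For the forward direction, I would start with a cycle $C$ in $G'$ such that $\rho(C)$ is non-trivial, together with a chosen traversal. Pick any vertex $i$ on $C$ and lift $C$ step by step starting from $i_0 \in V(\tilde{G})$: for each successive edge $jk$ in $C$ the next vertex in the lift is determined by the rule given by the definition of the development (adding $\gamma_{jk}$ when $jk$ is traversed forwards, subtracting when traversed backwards). Since $\tilde{G}$ is exactly the graph obtained by this construction, the result is an honest walk in $\pi^{-1}(C) \subseteq \pi^{-1}(G')$. After going around $C$ once, the terminal vertex is $i_{\rho(C)}$, which lies in the fiber over $i$ and is distinct from $i_0$ because $\rho(C)\neq 0$. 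This yields the desired path (extracting a simple path from the walk if necessary).

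For the reverse direction, suppose $P$ is a path in $\pi^{-1}(G')$ from $i_\gamma$ to $i_{\gamma'}$ with $\gamma \neq \gamma'$. Applying $\pi$ edge-wise produces a closed walk $W$ in $G'$ based at $i$. By the same lifting rule, the signed sum of colors along $W$ equals $\gamma' - \gamma \neq 0$ in $\Z/3\Z$. Now decompose $W$ in the oriented cycle space of $G'$ as a signed sum of simple cycles $C_1,\dots,C_t$ (formally, pick a spanning forest and repeatedly peel off fundamental cycles as in \lemref{fundamentalcycles}); since $\rho$ is linear on the cycle space and $\sum_\ell \pm \rho(C_\ell) = \gamma' - \gamma \neq 0$, at least one $C_\ell$ has non-trivial image. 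Hence $\rho(G')$ is non-trivial.

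The main obstacle is bookkeeping rather than any deep idea: one must be careful that the lifting procedure is well-defined on walks (not only on cycles), and that the reverse-direction argument handles the fact that the projected walk $W$ may repeat edges or vertices. The cleanest way to finish that point is to invoke \lemref{fundamentalcycles}, so that trivial $\rho$ on $G'$ would force trivial $\rho$ on every closed walk based at $i$, contradicting $\gamma'-\gamma\neq 0$. Once this walk/cycle bookkeeping is in place, the equivalence is essentially immediate from the definition of $\tilde{G}$.
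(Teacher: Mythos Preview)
Your argument is correct and rests on the same covering-space observation as the paper's: lifting a walk changes the fiber index by exactly the signed sum of the edge colors, so a lifted closed walk returns to its starting fiber precisely when $\rho$ of the projected walk vanishes. The paper takes a slightly different tactical route: instead of tracking the running sum directly, it first performs a gauge transformation---choosing a spanning tree for which $C$ is a fundamental cycle and recoloring so that all but at most one edge of $C$ has color $0$ without changing $\rho$---after which the structure of $\pi^{-1}(C)$ can be read off by inspection. Your direct bookkeeping is arguably cleaner (no recoloring needed, and the reverse direction via closed walks and \lemref{fundamentalcycles} is made explicit), while the paper's normalization makes the ``three disjoint copies versus connected'' dichotomy of \corref{conetwotypes} visually immediate.
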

This follows from the fact that $\pi$ is a covering map, but we give a proof for completeness.
\begin{proof}
Assume, w.l.o.g. that $G$ is connected.
Let $C$ be a cycle in $G$.  By \lemref{fundamentalcycles}, the map $\rho$ is defined
completely by its image on the fundamental cycles of a spanning tree $T$ of $G$.  Thus,
by picking a spanning tree for which $C$ is a fundamental cycle (one always exists),
we can recolor $G$ such that all but at most
one of the edges of $C$ has a zero color without changing $\rho$.

With this coloring, it is easy to see that if $\rho(C)=0$ then $\pi^{-1}(C)$ is
three disjoint copies of $C$: each of them contains only vertices $i_{\gamma}$
for a fixed $\gamma\in \{0,1,2\}$.  On the other hand, if $\rho(C)\neq 0$,
let $i\in V(G)$ be a vertex on $C$ and $i_0$ be in the fiber over $i$.  Following the
lift of $C$ in $\tilde{G}$, it will stay on vertices $j_0$ until, when $C$ crosses the
(single) edge with non-zero color, it will leave for a vertex $j_\gamma$,
$\gamma\neq 0$ and then end at $i_{\gamma}$.

Since $C$ was arbitrary, the proof is complete.
\end{proof}

An immediate corollary is
\begin{cor}\corlab{conetwotypes}
Let $(G,\bgamma)$ be a $\Z/3\Z$-colored graph, $G'$ a subgraph of $G$ and $\tilde{G}$ the development.
Then:
\begin{itemize}
\item If $G'$ has trivial image, its lift $\pi^{-1}(G')$ is three disconnected copies of $G'$.
\item If $G'$ has non-trivial image, its lift $\pi^{-1}(G')$ is connected.
\end{itemize}
\end{cor}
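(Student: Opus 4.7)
The plan is to reduce to the case where $G'$ is connected and then exploit the free $\Z/3\Z$-action carried by $\tilde{G}$, which restricts to $\pi^{-1}(G')$ because the latter is a union of fibers and fibers are $\Z/3\Z$-invariant. Once these preliminaries are set up, the two bullets follow fairly directly from \lemref{Z3path}.

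For the reduction, I would observe that $\pi^{-1}$ commutes with disjoint union, so it suffices to treat each connected component of $G'$ separately; the trivial/non-trivial $\Gamma$-image condition is a property of cycles and hence depends only on a single connected component. Assume henceforth that $G'$ is connected.

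In the trivial-image case, \lemref{Z3path} tells me that no two vertices $i_{\gamma}$, $i_{\gamma'}$ ($\gamma \neq \gamma'$) in a single fiber over $i\in V(G')$ lie in the same connected component of $\pi^{-1}(G')$. Since the orbit of $i_0$ under $\Z/3\Z$ is exactly the fiber $\{i_0,i_1,i_2\}$ and these lie in three distinct components, and since the $\alpha_z$ permute components, I obtain exactly three components meeting this fiber. A short path-lifting argument (lift any path in $G'$ from $i$ to another vertex $j$ starting at each of $i_0,i_1,i_2$) shows each of these three components surjects onto $V(G')$ via $\pi$, meeting every fiber in exactly one point; hence $\pi$ restricts to a graph isomorphism from each component to $G'$, giving three disconnected copies.

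In the non-trivial-image case, \lemref{Z3path} supplies some $\gamma \neq 0$ and some vertex $i$ with $i_0$ and $i_\gamma$ in a common connected component $C$. Applying $\alpha_\gamma$ to $C$ gives a component $\alpha_\gamma(C)$ containing $\alpha_\gamma(i_0)=i_\gamma \in C$, so $\alpha_\gamma(C)=C$; since $\gamma$ generates $\Z/3\Z$, iterating yields $\alpha_1(C)=\alpha_2(C)=C$, and hence the entire fiber $\{i_0,i_1,i_2\}\subset C$. The same path-lifting as in the previous case, now applied from $i$ to an arbitrary $j\in V(G')$, places all of $j_0,j_1,j_2$ in $C$, so $C=\pi^{-1}(G')$ and the lift is connected. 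The only subtle step, and the one I would be most careful about, is ensuring the path-lifts land in the fiber over $i$ (which is a standard covering-space fact for the map $\pi$), since everything else is direct bookkeeping with the group action and \lemref{Z3path}.
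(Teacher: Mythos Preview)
Your argument is correct and makes explicit what the paper leaves entirely implicit: the paper gives no proof, simply labeling the statement an ``immediate corollary'' of \lemref{Z3path}. Your use of the free $\Z/3\Z$-action on $\pi^{-1}(G')$ to propagate connectivity information from a single fiber to the whole lift is exactly the right mechanism, and the path-lifting step is the standard way to see that each component of the cover surjects onto the connected base. The observation that any nonzero $\gamma\in\Z/3\Z$ generates the group, so $\alpha_\gamma(C)=C$ forces $C$ to be $\Z/3\Z$-invariant, is the crux of the non-trivial case and is handled cleanly.

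One small wrinkle worth flagging: your reduction to connected $G'$ works for the first bullet (trivial image on $G'$ forces trivial image on every component, and three copies of each component assemble into three copies of $G'$), but not literally for the second. If $G'$ is disconnected and only one component carries the non-trivial image, then $\pi^{-1}(G')$ is genuinely disconnected, so the second bullet fails as stated for disconnected $G'$. This is a defect in the paper's formulation rather than in your argument; the corollary is only ever invoked in the paper for connected subgraphs (Laman circuits and their orbits), and your proof in the connected case is complete. The cleanest fix is to replace the reduction paragraph with the standing hypothesis that $G'$ is connected, as the paper itself does at the start of the proof of \lemref{Z3path}.
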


\paragraph{Proof of \lemref{orderthree}}
The proof of \lemref{orderthree} is immediate from \lemref{orderthree1}, \lemref{orderthree2}, and
\lemref{orderthree3}, which we prove below.  The proof sketch is:
\begin{itemize}
\item \lemref{orderthree1} says that if the development is a Laman graph, then the
colored graph is a cone-Laman graph.
\item \lemref{orderthree2} says that if the development is not Laman-sparse, then the
colored graph is not cone-Laman sparse.  This is the more difficult implication, since
a violation of Laman-sparsity in $\tilde{G}$ need not coincide with its orbit.
\item \lemref{orderthree3} establishes the correspondence between $(2,3)$-components in the
development that coincide with their orbits and cone-Laman components of the
colored graph.
\end{itemize}

We now state and prove the key lemmas.
\begin{lemma}\lemlab{orderthree1} Let $G$ be a $\Z/3\Z$-colored graph and
$\tilde{G}$ be the development. If $\tilde{G}$ is Laman-sparse then $(G,\bgamma)$
is cone-Laman-sparse.
\end{lemma}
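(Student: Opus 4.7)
The plan is to verify the two defining sparsity inequalities of a cone-Laman graph directly by lifting an arbitrary subgraph $G'$ of $G$ up to $\tilde{G}$ and applying Laman-sparsity of $\tilde{G}$ to the lift. The main tool is \corref{conetwotypes}, which tells us exactly what $\pi^{-1}(G')$ looks like depending on whether $G'$ has trivial or non-trivial $\Z/3\Z$-image.

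First I would fix an arbitrary subgraph $G'\subseteq G$ with $n'$ vertices and $m'$ edges, and observe that since $\pi$ triples vertices and edges, $\pi^{-1}(G')$ always has exactly $3n'$ vertices and $3m'$ edges (independent of connectivity). Then I would split into the two cases from \corref{conetwotypes}. In the trivial-image case, $\pi^{-1}(G')$ is a disjoint union of three isomorphic copies of $G'$; each copy is itself a subgraph of $\tilde{G}$ with $n'$ vertices and $m'$ edges, so Laman-sparsity of $\tilde{G}$ gives $m'\le 2n'-3$, which is exactly condition (2) of \theoref{cone}. In the non-trivial-image case, $\pi^{-1}(G')$ is a single connected subgraph of $\tilde{G}$ with $3n'$ vertices and $3m'$ edges, and applying Laman-sparsity to this whole lift yields $3m'\le 2(3n')-3$, i.e.\ $m'\le 2n'-1$, which is condition (3).

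Combining the two cases over all subgraphs $G'$ shows that $(G,\bgamma)$ satisfies the hereditary Ross/cone-Laman-sparsity counts, completing the proof. No main obstacle is anticipated: once \corref{conetwotypes} identifies the combinatorial shape of the lift, the argument is an arithmetic consequence of Laman-sparsity of $\tilde{G}$. The only subtlety worth flagging explicitly is that in the trivial case one must take just one of the three components of $\pi^{-1}(G')$ rather than the whole lift, since using all three copies would waste a factor of three on the $-3$ term and give only the weaker bound $m'\le 2n'-1$ instead of the desired $m'\le 2n'-3$.
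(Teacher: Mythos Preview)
Your proposal is correct and is essentially the same argument as the paper's, just phrased directly rather than as the contrapositive: both split on whether $G'$ has trivial or non-trivial $\Z/3\Z$-image via \corref{conetwotypes}, then in the trivial case use a single copy of $G'$ inside $\tilde{G}$ to get $m'\le 2n'-3$, and in the non-trivial case use the full lift on $3n'$ vertices and $3m'$ edges to get $m'\le 2n'-1$. One small cosmetic point: your claim that $\pi^{-1}(G')$ is connected in the non-trivial case assumes $G'$ itself is connected (which is the hypothesis under which \corref{conetwotypes} is proved), but you never actually use connectedness---only the counts $3n'$ and $3m'$---so the argument goes through unchanged, and if you wish you can simply reduce to connected $G'$ at the outset.
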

\begin{proof}
We prove the contrapositive.  Assume that $(G,\bgamma)$ is not cone-Laman sparse. There are two ways this can happen, and we check both cases.

\textbf{Case I:} The first case is when $G$ has a subgraph $G'$ with trivial image, $n'$ vertices, and $m'\ge 2n'-2$ edges.  By \corref{conetwotypes},
$\pi^{-1}(G')$ is three copies of $G'$, each of which violates Laman sparsity in $\tilde{G}$.

\textbf{Case II:} Otherwise, $G$ has a subgraph $G'$ with non-trivial image, $n'$ vertices and at least $2n'$ edges.  \corref{conetwotypes}
implies that $\pi^{-1}(G')$ is connected and coincides with its orbit.  Thus, $\pi^{-1}(G')$ has $3n'$ vertices and at least $6n'$ edges, again
violating Laman sparsity in $\tilde{G}$.
\end{proof}

\begin{lemma}\lemlab{orderthree2}
Let $G$ be a $\Z/3\Z$-colored graph and
$\tilde{G}$ be the development. If $\tilde{G}$ is not Laman-sparse then $G$
is not cone-Laman-sparse.
\end{lemma}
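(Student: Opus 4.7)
The plan is, given an arbitrary subgraph $H \subseteq \tilde G$ with $|E(H)| \ge 2|V(H)|-2$, to extract a subgraph of $G$ violating one of the cone-Laman-sparsity counts; all of the bookkeeping is done by tracking how $H$ intersects each fiber of $\pi$, rather than by replacing $H$ with its orbit (which, as the paper notes, loses too much since $H$ need not be symmetric). Set $G' := \pi(H)$ and, for each $v \in V(G')$ and $e \in E(G')$, let $c_v = |V(H) \cap \pi^{-1}(v)| \in \{1,2,3\}$ and $d_e = |E(H) \cap \pi^{-1}(e)| \in \{1,2,3\}$. The inequality $d_e \le \min(c_u, c_v)$ combined with the identity $\min(c_u, c_v) = \sum_{i=1}^{3} \mathbb{1}[c_u \ge i]\mathbb{1}[c_v \ge i]$ expresses the counts as sums over the filtration $V_i := \{v : c_v \ge i\}$: writing $n_i, m_i$ for the vertex and edge counts of $G'|_{V_i}$, one gets $|V(H)| = \sum_i n_i$ and $|E(H)| \le \sum_i m_i$, so the hypothesis becomes $\sum_i m_i \ge 2\sum_i n_i - 2$.

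I would then suppose for contradiction that $G$ is cone-Laman-sparse. Each $G'|_{V_i}$ then satisfies the loose bound $m_i \le 2n_i - 1$ (which covers both the trivial-image count $2n_i - 3$ and the non-trivial-image count $2n_i - 1$), so summing over the set $I := \{i : V_i \ne \emptyset\}$ of nonempty indices gives $\sum_{i \in I} m_i \le 2\sum_{i \in I} n_i - |I|$, forcing $|I| \le 2$; the case $|I| = 3$ is therefore immediate. For $|I| = 1$ every $c_v = 1$, so $H$ is a section of the cover $\pi^{-1}(G') \to G'$: letting $f(v)$ be the unique level with $v_{f(v)} \in V(H)$, the constraint $d_e = 1$ forces $\gamma_e = f(v) - f(u)$ for every edge $uv$, so $\rho$ telescopes to $0$ on every cycle and $G'$ has trivial $\Z/3\Z$-image, whence $m' \ge 2n'-2$ violates the trivial-image cone-Laman bound $m' \le 2n'-3$.

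The main obstacle will be the case $|I| = 2$, where the inequalities above must all be tight: both $G'$ and $G'|_{V_2}$ must have non-trivial image (otherwise the tighter cone-Laman bound $m_i \le 2n_i - 3$ would re-introduce slack), and $d_e = \min(c_u, c_v) = 2$ must hold for every edge of $G'|_{V_2}$. I would dispatch this by a cycle argument: for each $v \in V_2$ let $x_v \in \Z/3\Z$ be the unique missing level in $\pi^{-1}(v) \setminus V(H)$; then the equation $d_e = 2$ for $e = uv$ unwinds to $x_v - x_u = \gamma_e$, so traversing any cycle $C \subseteq G'|_{V_2}$ gives $\rho(C) = \sum (x_v - x_u) = 0$, contradicting the non-triviality of $G'|_{V_2}$. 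This closes the last case and shows $G$ cannot be cone-Laman-sparse.
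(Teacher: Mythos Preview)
Your proof is correct and takes a genuinely different route from the paper's.  The paper first replaces an arbitrary violating subgraph by a Laman \emph{circuit} $\tilde G'$, then symmetrizes by passing to the orbit $\mathcal O = \tilde G'\cup\alpha_1(\tilde G')\cup\alpha_2(\tilde G')$; the point of the circuit reduction is that the pairwise and triple intersections of the rotates are \emph{proper} subgraphs of a Laman circuit, hence Laman-sparse, and an inclusion--exclusion count then shows $\mathcal O$ has at least $2|V(\mathcal O)|$ edges in the connected case.  (So your parenthetical is slightly off: symmetrizing does \emph{not} lose too much, provided one first passes to a circuit.)  Your argument dispenses with both the minimality step and the orbit entirely, instead stratifying $\pi(H)$ by the level sets $V_i=\{v:c_v\ge i\}$ and reading off the edge count via $\min(c_u,c_v)=\sum_i[c_u\ge i][c_v\ge i]$.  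The payoff is a uniform mechanism in the tight cases $|I|=1,2$: in each, the colours on the relevant induced subgraph are forced to be a ``coboundary'' $\gamma_e=\phi(\text{head})-\phi(\text{tail})$ for a vertex potential $\phi$ (namely $f$ when $|I|=1$ and the missing-level function $x$ when $|I|=2$), so $\rho$ telescopes on every cycle and the image is trivial, contradicting the equality $m_i=2n_i-1$.  This is more elementary than the paper's inclusion--exclusion and makes no use of circuit structure; the paper's approach, on the other hand, more transparently generalizes the ``blocks intersect in blocks'' philosophy used elsewhere in the $(k,\ell)$-sparsity toolkit.
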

\renewcommand{\O}{\mathcal{O}}
\begin{proof}

We prove the contrapositive.  Suppose that $\tilde{G}$ is not Laman-sparse.  Then
it contains a Laman circuit $\tilde{G}'$; let $\mathcal{O}$ be its orbit.
We will show that $\pi(\O)$ violates cone-Laman sparsity in $(G,\bgamma)$.

There are two cases to consider, by \corref{conetwotypes}.

\textbf{Case I:} If $\O$ is three copies of $\tilde{G}'$, then
$\pi(\O)$ is also a copy of $\tilde{G}$, with trivial image.  This violates
cone-Laman sparsity.

\textbf{Case II:} Otherwise, $\O$ is connected, and thus $\alpha_{\gamma}(\tilde{G}')$ and
$\alpha_{\gamma+1}(\tilde{G})$ have non-empty intersection for all $\gamma\in \{0,1,2\}$.
We define $A$ to be $\tilde{G}\cap \alpha_{1}(\tilde{G}')$, and note that all the
pairwise intersections are isomorphic to $A$.  Define $B$ to be
$\tilde{G}'\cap \alpha_1(\tilde{G}')\cap \alpha_2(\tilde{G')}$.  Inclusion-exclusion
shows that
\begin{equation}\label{oedges}
|E(\O)| = 3|E(\tilde{G}')| - 3|E(A)| + |E(B)|
\end{equation}
Here is the key step: since $\tilde{G}'$ is a Laman circuit, $A$ and $B$ are both Laman-sparse.  Thus, the
right-hand-side is minimized when $A$, and $B$, if non-empty, are $(2,3)$-tight: $B$ is a subgraph of $A$, so
adding edges to $A$ or $B$ contributes a negative amount to the r.h.s. of \eqref{oedges}.
In this case, plugging into \eqref{oedges} shows that, if $\O$ has $n'$ vertices, it has exactly $2n'$ edges.

Finally, \corref{conetwotypes} says that $\pi(\O)$ has non-trivial image, and we showed above that
it violates $(2,1)$-sparsity.  This concludes the second case and the proof.
\end{proof}

\begin{lemma}\lemlab{orderthree3} Let $(G,\bgamma)$ be a $\Z/3\Z$-colored graph.
A subgraph $G'$ of $G$ is a cone-Laman rigid component if and only if $\pi^{-1}(G')$ is a symmetric $(2,3)$-component of $\tilde{G}$.
\end{lemma}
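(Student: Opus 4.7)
I plan to prove both implications by relating sparsity counts on $G'$ to those on $\pi^{-1}(G')$, translating between ``non-trivial image'' and ``connected lift'' via \corref{conetwotypes}. Throughout, $(G,\bgamma)$ is cone-Laman-sparse (implicit in the statement), so by \lemref{orderthree1} and \lemref{orderthree2} the development $\tilde{G}$ is Laman-sparse, and all the fiber sizes over vertices and edges equal $3$.

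For the forward direction, suppose $G'$ is a cone-Laman rigid component, i.e., a maximal subgraph with $|E(G')|=2|V(G')|-1$; cone-Laman-sparsity then forces $G'$ to have non-trivial $\Z/3\Z$-image. By \corref{conetwotypes}, $\pi^{-1}(G')$ is connected, and the fiber counts give $|V(\pi^{-1}(G'))|=3|V(G')|$ and $|E(\pi^{-1}(G'))|=3|E(G')|=2(3|V(G')|)-3$, so $\pi^{-1}(G')$ is Laman-tight. It is symmetric by construction. To see it is a $(2,3)$-component, let $\tilde{C}$ be the unique $(2,3)$-component of $\tilde{G}$ containing $\pi^{-1}(G')$ (uniqueness is the structure theorem for $(k,\ell)$-graphs). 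The $\Z/3\Z$-action is by graph automorphisms of $\tilde{G}$, hence permutes $(2,3)$-components; since $\pi^{-1}(G')$ is symmetric, each image $\alpha_z(\tilde{C})$ is also a $(2,3)$-component containing $\pi^{-1}(G')$, and uniqueness forces $\alpha_z(\tilde{C})=\tilde{C}$. Therefore $\tilde{C}$ is symmetric, so $H:=\pi(\tilde{C})$ satisfies $\pi^{-1}(H)=\tilde{C}\supseteq \pi^{-1}(G')$ and in particular $H \supseteq G'$. Reversing the count gives $|E(H)|=2|V(H)|-1$ with non-trivial image (since $\tilde{C}$ is connected), so $H$ is cone-Laman-tight; maximality of $G'$ then gives $H=G'$ and $\tilde{C}=\pi^{-1}(G')$.

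For the backward direction, suppose $\pi^{-1}(G')$ is a symmetric $(2,3)$-component. Every Laman-tight graph is connected, so by \corref{conetwotypes}, $G'$ has non-trivial image, and the same fiber counting yields $|E(G')|=2|V(G')|-1$. If some strictly larger $H\supsetneq G'$ were also cone-Laman-tight in $(G,\bgamma)$, cone-Laman-sparsity would force $H$ to have non-trivial image (because $2|V(H)|-1>2|V(H)|-3$), so \corref{conetwotypes} makes $\pi^{-1}(H)$ connected, and the fiber count makes it Laman-tight and a strict supergraph of $\pi^{-1}(G')$, contradicting its maximality as a $(2,3)$-component.

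The main obstacle is the forward direction's maximality step: arguing that the $(2,3)$-component of $\tilde{G}$ containing the symmetric lift $\pi^{-1}(G')$ is itself symmetric. The argument hinges on the structure theorem for $(k,\ell)$-graphs, which gives a unique $(2,3)$-component through each Laman-tight subgraph, combined with the fact that the $\Z/3\Z$-action is by graph automorphisms. Once that symmetry is secured, the identity $\tilde{C}=\pi^{-1}(\pi(\tilde{C}))$ lets maximality pass between the two sides cleanly and the rest reduces to the arithmetic $3(2n'-1) = 2(3n')-3$.
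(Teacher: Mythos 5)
Your proof is correct and follows essentially the same route as the paper: establish the block-level correspondence between cone-Laman-tight subgraphs of $(G,\bgamma)$ and symmetric Laman-tight subgraphs of $\tilde{G}$ (via \corref{conetwotypes} and the fiber count $3(2n'-1)=2(3n')-3$), then transfer maximality between the two sides. The paper compresses the maximality step into a single sentence; your observation that the unique $(2,3)$-component containing the symmetric block $\pi^{-1}(G')$ must itself be fixed by the $\Z/3\Z$-action (since automorphisms permute components) supplies exactly the detail that sentence leaves implicit.
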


\begin{proof}
By \lemref{orderthree1} and \lemref{orderthree2}, a subgraph $G'$ of $G$
is a cone-Laman block if and only if its lift $\pi^{-1}(G')$ is symmetric and a
Laman block in the development $\tilde{G}$.  Maximality of rigid components
now implies the statement.
\end{proof}

With these lemmas, the proof of \lemref{orderthree} is complete.  \hfill $\square$

\end{document}